\documentclass[11pt,a4paper]{article}
\usepackage{times}
\usepackage{helvet}
\usepackage{courier}

\usepackage{amssymb,latexsym}
\usepackage{graphicx,amsmath,amsfonts,amsthm}
\usepackage{pgf,pgfarrows,pgfnodes,pgfautomata,pgfheaps,pgfshade}
\usepackage[hidelinks]{hyperref}
\usepackage{tikz}
\usepackage{paralist}
\usepackage{color}
\usepackage{comment}
\usepackage[round]{natbib}
\usepackage{breakurl}
\usepackage[]{units}
\usepackage[]{xspace}
\usepackage{url}
\usepackage{booktabs}

\setlength{\oddsidemargin}{0.25in}
\setlength{\evensidemargin}{\oddsidemargin}
\setlength{\textwidth}{6in}
\setlength{\textheight}{8in}
\setlength{\topmargin}{-0.0in}

\newtheorem{theorem}{Theorem}
\newtheorem{proposition}{Proposition}
\newtheorem{corollary}[proposition]{Corollary}

\newtheorem{definition}{Definition}
\newtheorem{example}{Example}

\newcommand{\naturals}{{{\mathbb{N}}}}

\newcommand{\reals}{{{\mathbb{R}}}}

\newcommand{\np}{{\mathrm{NP}}}

\newcommand{\calR}{\mathcal{R}}


\newcommand{\secref}[1]{Section~\ref{#1}}
\newcommand{\figref}[1]{Figure~\ref{#1}}

\newcommand{\thmref}[1]{Theorem~\ref{#1}}

\newcommand{\propref}[1]{Proposition~\ref{#1}}


\newcommand{\ie}{i.e.,\xspace}
\newcommand{\eg}{e.g.,\xspace}

\newcommand{\phrag}{Phragm\'{e}n\xspace}

\newcommand{\maxP}{max-\phrag}
\newcommand{\varP}{var-\phrag}


\RequirePackage{ifthen,calc}
\newsavebox{\ffbox}\newlength{\ffboxlen}
\newcommand{\todo}[1]{%
  {\sbox{\ffbox}{\textbf{TODO:}\ \textit{{#1}}\ \textbf{:ODOT}}
    \settowidth{\ffboxlen}{\usebox{\ffbox}}
		\addtolength{\ffboxlen}{-5mm}
    \ifthenelse{\ffboxlen>\linewidth}{%
      \noindent\marginpar{$>>>>$}\textbf{TODO:}\ \textit{{#1}}\ \textbf{:ODOT}\marginpar{$<<<<$}}{%
      \noindent\marginpar{$>><<$}\textbf{TODO:}\ \textit{{#1}}\ \textbf{:ODOT}}}}



\allowdisplaybreaks
\sloppy

\title{Multiwinner Approval Rules as Apportionment Methods
}

\date{}

\author{
Markus Brill\\
Technische Universit\"at Berlin\\
brill@tu-berlin.de
\and Jean-Fran\c{c}ois Laslier\\
Paris School of Economics\\
jean-francois.laslier@ens.fr
\and Piotr Skowron\\
  University of Warsaw\\
p.skowron@mimuw.edu.pl 
}


\begin{document}
	
%
\newcommand{\dhondt}{D'Hondt method\xspace} 
\newcommand{\sainte}{Sainte-Lagu\"e method\xspace} 
\newcommand{\hamilton}{largest remainder method\xspace} 
\newcommand{\Hamilton}{Largest remainder method\xspace} 
\newcommand{\HAMILTON}{Largest Remainder Method\xspace} 
%

\maketitle


\begin{abstract}
	We establish a link between multiwinner elections and apportionment problems by showing how approval-based multiwinner election rules can be interpreted as methods of apportionment. We consider several multiwinner rules and observe that some, but not all, of them induce apportionment methods 
	that are well established in the literature and in the actual practice of representation, be it proportional or non-proportional. For instance, we show that Proportional Approval Voting induces the \dhondt and that Monroe's rule induces the \hamilton. Our approach also yields apportionment methods implementing degressive proportionality. Furthermore, we consider properties of apportionment methods and exhibit multiwinner rules that induce apportionment methods satisfying these properties.
\end{abstract}

%

%

\section{Introduction}

In a parliamentary election, the candidates are traditionally organized in political parties and the election determines how many parliamentary seats each party is allocated. Under so-called ``closed party lists'' systems of representation, a voter is allowed to give her vote to one and only one party. In a sense, this forces the voter to approve all candidates from one party and no candidate from any other party. Counting the ballots under a closed-list system amounts to deciding how many seats each party gets based on the ``size'' of each party in the electorate, and is therefore formally an \textit{apportionment} problem, identical for instance to the problem of allocating seats to the states of a union based on the population figures.

Though widely used \citep{Farrell01}, closed-list systems have a number of drawbacks. For instance, it is a known feature of these systems that
candidates tend to campaign \emph{within} their parties (for being placed on a good position on the party list), rather than  campaigning for the citizens' votes. Closed-list systems thus favor party discipline, at the potential expense of alienating the political elites from the citizens
\citep[\eg see][]{personalRepresentation, audrey2015electoral, ames1995ElectoralStrategies,chang2005ElectoralIncentives}.

In an attempt to overcome these drawbacks, many countries use various forms of ``open-list'' systems, keeping the idea of party lists but giving some flexibility to the voters by allowing them to vote for specific candidates inside their chosen party list. 
In some (rare) cases, voters are given even more freedom. Under so-called ``panachage'' systems, sometimes used in Luxembourg and in France, voters can vote for candidates from different parties. Personalization is complete when a rule allows voters to vote directly for individual candidates, and the outcome of the election does not depend on how candidates are grouped into parties. Such is the case for some elections in Switzerland, were variants of multiwinner approval voting are used in several cantons \citep[see][]{laslier2016StrategicVoting}.
In a recent book, \citet{RePi16a} extensively examine the trend towards greater personalization, which they see as ``one of the key shifts in contemporary politics.''

This paper will take as object such fully open voting systems. In the literature they are called ``multiwinner'' rules because they elect a fixed number of candidates, for instance a whole parliament.
In a multiwinner election, we are given a set of voters who entertain preferences over a set of alternatives. Based on these preferences, the goal is to select a \emph{committee}, \ie a (fixed-size) subset of the alternatives. Preferences are usually specified either as rankings, i.e., linear orders over the set of all alternatives \citep[\eg][]{BramsFishburn02a,elk-fal-sko-sli:c:multiwinner-rules}, or as approval votes, i.e., yes/no assessments of the alternatives \citep[\eg][]{KBS06a}. We are particularly interested in the latter variant, where each voter can be thought of as specifying a subset of alternatives that she finds ``acceptable''.  

The decision scenario modeled by multiwinner elections---selecting a subset of objects from a potentially much larger pool of available objects---is ubiquitous, and includes picking players to form a sports team, selecting items to display in an online shop, choosing the board of directors of a company, etc. (for a more detailed discussion on the applications of multiwinner election rules, we refer the reader to the recent survey by \citet{FSST-trends}), 
but we are here chiefly interested in the political context. In this context, even if the voting rule allows personal candidacy and votes, most candidates do belong to political parties. It therefore makes sense to study, as a theoretical benchmark, the behavior of a multiwinner rule in the specific cases where voter preferences perfectly reflect some underlying party structure.
 
In order to do so, we imagine that the ballots cast are such that, when a voter votes for a candidate belonging to some party, she also votes for all the candidates of that party, and for no other candidates. As we have already mentioned, counting such ballots, and deciding how many candidates are elected from each party, is an apportionment problem. Any apportionment problem can be seen as a very simple approval voting instance: all voters approve all the candidates from their chosen party, and only those.
By contrast, in the context of approval-based multiwinner elections, voters are allowed to express more fine-grained preferences over individual candidates rather than over whole political parties.  
 
The present paper formally establishes and explores this analogy between multiwinner elections and apportionment problems. We show how an apportionment problem can be phrased as an instance of an approval-based multiwinner election, thereby rendering multiwinner rules applicable to the apportionment setting. As a result, every approval-based multiwinner rule induces a method of apportionment. Exploring this link between multiwinner rules and apportionment methods is interesting for at least two reasons. First, observing what kind of apportionment method a given multiwinner rule induces, yields new insights into the nature of the rule. Indeed, we argue that in order to understand multiwinner voting rules, one should first understand what they do on the important subdomain of apportionment.
Second, every multiwinner rule inducing a given apportionment method can be seen as an extension of the apportionment method to a more general setting where candidates have no party affiliations (or party affiliations are ignored in the election process).

After formally establishing the link between approval-based multiwinner rules and apportionment methods, we consider several known multiwinner rules and observe that some of them induce (and extend) apportionment methods that are well-established in the apportionment literature. For instance, \emph{Proportional Approval Voting (PAV)} induces the \dhondt (aka the Jefferson method) and \emph{Monroe's rule} induces the \hamilton (aka the Hamilton method). 
We also consider properties of apportionment methods related to proportionality (such as \emph{lower quota}), degressive proportionality (such as the \emph{Penrose condition}), or thresholds of representation, and exhibit multiwinner rules that induce apportionment methods satisfying these properties. Consequently, our work provides new tools for a normative comparison of various multiwinner voting rules, with respect to different forms of proportionality and non-proportionality.    



The paper is organized as follows.
\secref{sec:setting} formally introduces both the apportionment problem and the multiwinner election setting. 
\secref{sec:apportionmentConstruction}  shows how approval-based multiwinner rules can be employed as apportionment methods, and contains several results related to proportional representation.
\secref{sec:degressive}  is devoted to non-proportional representation in the form of degressive proportionality and thresholds, and \secref{sec:conclusion}  concludes.

\section{The Apportionment Problem and Approval-Based Multiwinner Elections}
\label{sec:setting}

In this section we provide the formal setting for the apportionment problem and for approval-based multiwinner elections. For a natural number $t \in \naturals=\{1,2,\ldots\}$, let $[t]$ denote the set $\{1, 2, \ldots, t\}$.

\subsection{Apportionment Methods}

In the apportionment setting, there is a finite set of voters and a finite set of $p$ parties $P_1, \ldots, P_p$. Every voter votes for exactly one party, and for each $i \in [p]$, we let $v_i$ denote the number of votes that party $P_i$ receives, \ie the number of voters who voted for $P_i$. The goal is to allocate~$h$ (parliamentary) \emph{seats} among the parties. Formally, an instance of the apportionment problem is given by a tuple $(v,h)$, where $v = (v_1,\ldots,v_p) \in \naturals^p$ is the vote distribution and $h \in \naturals$ is the number of seats to distribute. We use $v_{+}$ to denote the total number of votes, $v_{+} = \sum_{i=1}^p v_i$. Throughout this paper, we assume that $v_i>0$ for all $i \in [p]$ and $h>0$. An \emph{apportionment method} $M$ maps every instance $(v,h)$ to a nonempty set\footnote{Most apportionment methods allow for ties. In this paper we do not consider any specific tie-breaking rule but rather assume that an apportionment method can return several tied outcomes.} 
$M(v,h)$ of \emph{seat distributions}. A seat distribution is a vector $(x_1, \ldots, x_p) \in (\naturals \cup \{0\})^p$ with $\sum_{i=1}^p x_i = h$. Here, $x_i$ corresponds to the number of seats allocated to party $P_i$.

\newcommand{\move}[2]{{#1 \curvearrowright #2}}	
	
	In our proofs we often argue about seat distributions that result from a given seat distribution by taking away a single seat from a party and giving it to another party. For a seat distribution $x=(x_1,\ldots,x_p)$ and two parties $P_i$ and $P_j$ such that $x_i>0$, let $x_\move{i}{j}$ denote the seat distribution $x'$ with $x'_i = x_i-1$, $x'_j = x_j +1$, and $x'_\ell = x_\ell$ for all $\ell \notin \{i,j\}$. 

\subsubsection{Divisor Methods}

A rich and very well-studied class of apportionment methods is defined via divisor sequences. 


\begin{definition}[Divisor method]
	Let $d=(d(0), d(1), d(2), \ldots)$ be a sequence with $0 < d(j) \le d(j+1)$ for all $j \in \naturals \cup \{0\}$. The \emph{divisor method based on $d$} is the apportionment method that maps a given instance $(v,h)$ to the set of all seat allocations that can result from the following procedure:
	Start with the empty seat allocation $(0, \ldots, 0)$ and iteratively assign a seat to a party $P_i$ maximizing $\nicefrac{v_i}{d(s_i)}$, where $s_i$ is the number of seats that have already been allocated to party $P_i$.
\end{definition}



Divisor methods are often defined in a procedurally different, but mathematically equivalent way.%
\footnote{See \citet{BaYo82a}, Proposition 3.3. Divisor methods with $d(0)=0$ can also be defined. For such methods, which \citet{Puke14a} calls \emph{impervious}, the conventions $\frac{v_i}{0} = \infty$ and $\frac{v_i}{0} \ge \frac{v_j}{0} \Leftrightarrow v_i \ge v_j$ are used. Examples of impervious divisor methods are the methods due to Huntington and Hill, Adams, and Dean.\label{fn:impervious}} 
Two prominent divisor methods are the \dhondt and the \sainte, but several other divisor methods such as 
the \emph{Huntington-Hill method} (aka the \emph{method of equal proportions}), 
the \emph{Adams method} (aka the \emph{method of smallest divisors}), and
the \emph{Dean method}
are studied in the literature on fair representation. We refer the reader to the books of \citet{BaYo82a} and \citet{Puke14a} for an extensive overview on these methods, their history and their use in various countries. 

\begin{definition}[\dhondt]
	The \emph{\dhondt} 
	(aka the \emph{Jefferson method} or the \emph{Hagenbach-Bischoff method}) 
	is the divisor method based on the sequence $d_{\text{D'Hondt}}=(1,2,3,\ldots)$. Therefore, in each round, a seat is allocated to a party $P_i$ maximizing $\frac{v_i}{s_i+1}$, where $s_i$ is the number of seats that have already been allocated to party $P_i$.
\end{definition}

The \dhondt was first used in 1791 to apportion seats in the U.S. House of Representatives, and currently it is used as a legislative procedure in over 40 countries, often coupled with thresholds of representation (see \secref{sec:thresholds} on this last point). 

\begin{definition}[\sainte]
	The \emph{\sainte} 
	(aka the \emph{Webster method}, the \emph{Schepers method}, or the \emph{method of major fractions}) 
	is the divisor method based on the sequence $d_{\text{SL}}=(1,3,5,\ldots)$. Thus, in each round, a seat is allocated to a party $P_i$ maximizing $\frac{v_i}{2 s_i + 1}$, where $s_i$ is the number of seats that have already been allocated to party $P_i$.
\end{definition}

The \sainte was first adopted in 1842 for allocating seats in the United States House of Representatives, and currently it is used for parliamentary election in some countries (for instance, Latvia, New Zealand, Norway, and Sweden) and for several state-level legislatures in Germany.

The following example illustrates the two methods defined above.
\begin{example} \label{ex1}
Consider the instance with four parties $P_1, P_2, P_3, P_4$ and $100$ voters such that $(v_1, v_2, v_3, v_4) = (6, 7, 39, 48)$, and assume that there are $h = 10$ seats to be allocated. 
The outcomes of the \dhondt and the \sainte, which can be computed with the help of Table \ref{tab:dhondt-ex}, are $(0, 0, 4, 6)$ and $(1, 1, 4, 4)$, respectively.

\begin{table}[!htbp]
\centering
\begin{minipage}{0.45\linewidth}
\begin{align*}
\begin{array}{lrrrr} \toprule
  & P_1  & P_2  & P_3  & P_4  \\
  \midrule
\nicefrac{v_i}{1} & 6 & 7 & \mathbf{39} & \mathbf{48} \\
\nicefrac{v_i}{2} & 3 & 3.5 & \mathbf{19.5} & \mathbf{24} \\
\nicefrac{v_i}{3} & 2 & 2.33 & \mathbf{13} & \mathbf{16} \\
\nicefrac{v_i}{4} & 1.5 & 1.75 & \mathbf{9.75} & \mathbf{12} \\
\nicefrac{v_i}{5} & 1.2 & 1.4 & 7.8 & \mathbf{9.6} \\
\nicefrac{v_i}{6} & 1 & 1.17 & 6.5 & \mathbf{8.0} \\
\nicefrac{v_i}{7} & 0.86 & 1 & 5.57 & 6.86 \\
\bottomrule
\end{array}
\end{align*}
\end{minipage}
\begin{minipage}{0.45\linewidth}
\begin{align*}
\begin{array}{lrrrr} \toprule
  & P_1  & P_2  & P_3  & P_4  \\
  \midrule 
\nicefrac{v_i}{1} & \mathbf{6} & \mathbf{7} & \mathbf{39} & \mathbf{48} \\
\nicefrac{v_i}{3} & 2 & 2.33 & \mathbf{13} & \mathbf{16} \\
\nicefrac{v_i}{5} & 1.2 & 1.4 & \mathbf{7.8} & \mathbf{9.6} \\
\nicefrac{v_i}{7} & 0.86 & 1.0 & \mathbf{5.57} & \mathbf{6.86} \\
\nicefrac{v_i}{9} & 0.67 & 0.78 & 4.33 & 5.33 \\ 
\bottomrule
\end{array}
\end{align*}
\end{minipage}
\caption{Tables illustrating the computation of the \dhondt (left table) and \sainte (right table) in Example~\ref{ex1}. A column corresponding to party $P_i$ contains the ratios $\nicefrac{v_i}{d(0)}, \nicefrac{v_i}{d(1)}, \nicefrac{v_i}{d(2)}, ...$ with the sequences of denominators given by the respective divisor method (the ratios are rounded to two decimal places). The $h=10$ largest numbers in each table are printed in bold and correspond to the resulting seat allocation.}
\label{tab:dhondt-ex}
\end{table}
\end{example}


As we can see in the above example, different divisor methods might give different results for some instances of the apportionment problem.
In particular, the \dhondt slightly favors large parties over small ones in comparison to the \sainte~\citep[see, \eg][]{BaYo82a}.

\subsubsection{The \HAMILTON}

The \hamilton is the most well-known apportionment method that is not a divisor method. 
Recall that $v_+ = \sum_{i=1}^p v_i$ denotes the total number of votes.  

\begin{definition}[\Hamilton]
	The \emph{\hamilton}
	(aka the \emph{Hamilton method} or the \emph{Hare-Niemeyer method}) is defined via two steps. In the first step, each party $P_i$ is allocated $\lfloor \frac{v_ih}{v_+} \rfloor$ seats. In the second step, the remaining seats are distributed among the parties so that each party gets at most one of them. To do so, the parties are sorted according to the remainders $\frac{v_ih}{v_+} - \lfloor \frac{v_ih}{v_+} \rfloor$ and the remaining seats are allocated to the parties with the largest remainders.
\end{definition}

The \hamilton was first proposed by Alexander Hamilton in 1792 and it was used as a rule of distributing seats in the U.S. House of Representatives between 1852 and 1900. Currently, it is used for parliamentary elections in Russia, Ukraine, Tunisia, Namibia, and Hong Kong.

\begin{example}
Consider again the instance in Example~\ref{ex1}. According to the \hamilton, in the first step, party $P_1$ is allocated $\lfloor \frac{6 \cdot 10}{100} \rfloor = 0$ seats, $P_2$ gets $\lfloor \frac{7 \cdot 10}{100} \rfloor = 0$ seats, $P_3$ gets $\lfloor \frac{39 \cdot 10}{100} \rfloor = 3$ seats, and $P_4$ gets $\lfloor \frac{48 \cdot 10}{100} \rfloor = 4$ seats. There are three more seats to be distributed among the parties. The three highest fractional parts are $\frac{9}{10}$ for party $P_3$, $\frac{8}{10}$ for party~$P_4$ and  $\frac{7}{10}$ for party~$P_2$, thus these three parties are allocated one additional seat each. The resulting seat allocation is $(0,1,4,5)$.
\end{example}

\subsubsection{Properties of Apportionment Methods Related to Proportionality}

The literature on proportional representation has identified a number of desirable properties of apportionment methods \citep{BaYo82a,Puke14a}. 
Here, we focus on properties requiring that the proportion of seats in the resulting apportionment should reflect, as close as possible, the proportion of the votes cast for respective parties. 

\begin{definition}
An apportionment method \emph{respects lower quota} if, for every instance $(v,h)$, each party $P_i$ gets at least $\lfloor \frac{v_i h}{v_+} \rfloor$ seats. An apportionment method \emph{respects quota} if each party $P_i$ gets either $\lfloor \frac{v_ih}{v_+} \rfloor$ or $\lceil \frac{v_ih}{v_+} \rceil$ seats. 
\end{definition}

Clearly, any apportionment method respecting quota also respects lower quota. It is well known that the \hamilton respects quota, but no divisor method does. Moreover, the \dhondt is the only divisor method respecting lower quota \citep{BaYo82a}.

\subsection{Approval-Based Multiwinner Election Rules}\label{sec:multiwinnerOverview}

We now introduce the setting of approval-based multiwinner elections. We have a finite set $N = \{1, 2, \ldots, n\}$ of voters and a finite set $C = \{c_1, c_2, \ldots, c_m\}$ of candidates. Each voter expresses their preferences by approving a subset of candidates, and we want to select a committee consisting of exactly $k$ candidates. We will refer to $k$-element subsets of $C$ as \emph{size-$k$ committees}, and we let $A_i \subseteq C$ denote the set of candidates approved by voter $i \in N$.
Formally, an instance of the approval-based multiwinner election problem is given by a tuple $(A,k)$, where $A = (A_1, A_2, \ldots, A_n)$ is a \emph{preference profile} and $k$ is the desired committee size. An \emph{approval-based multiwinner election rule} (henceforth \emph{multiwinner rule}) $\calR$ is a function that maps every instance $(A,k)$ to a nonempty set\footnote{In order to accommodate tied outcomes, several committees might be winning.} $\calR(A,k)$ of size-$k$ committees. Every element of $\calR(A,k)$ is referred to as a \emph{winning committee}.

\subsubsection{Thiele Rules}

A remarkably general class of multiwinner election rules was proposed by Danish polymath Thorvald N. \citet{Thie95a}. A \emph{weight sequence} is an infinite sequence of real numbers $w = (w_1, w_2, \ldots)$.


\begin{definition}[Thiele rules]
Consider a weight sequence $w$, a committee $S \subseteq C$, and a voter $i \in N$ with approval set $A_i \subseteq C$. The \emph{satisfaction} of $i$ from $S$ given $w$ is defined as 
$u_i^w(S) = \sum_{j=1}^{|A_i \cap S|} w_j$, with $u_i^w(S) =0$ if $A_i \cap S = \emptyset$. 
Given an instance $(A,k)$ of the multiwinner election problem, the rule \emph{$w$-Thiele} selects all size-$k$ committees $S$ that maximize the total satisfaction $\sum_{i \in N} u_i^w(S)$.	
\end{definition}

Note that multiplying a weight sequence $w$ by a positive constant does not change the way in which a rule operates.
Several established multiwinner election rules can be described as Thiele rules. 

\begin{definition}
The \emph{Chamberlin--Courant rule} is $w_{\text{CC}}$-Thiele, where $w_{\text{CC}} = (1, 0, 0, \ldots)$. 
\end{definition}

The Chamberlin--Courant rule is usually defined in the context of multiwinner elections where voter preferences are given by \emph{ranked ballots}, and each voter derives satisfaction only from their most preferred member of the committee \citep{ChCo83a}. Our definition of the rule is a straightforward adaption to the approval setting: a voter is satisfied with a committee if and only if it contains at least one candidate that the voter approves of.


\begin{definition}
\emph{Proportional Approval Voting (PAV)} is $w_{\text{PAV}}$-Thiele, where $w_{\text{PAV}} = (1, \nicefrac{1}{2}, \nicefrac{1}{3}, \ldots)$.  
\end{definition}

Though sometimes attributed to Forest Simmons, PAV was already proposed and discussed by Thiele in the 19th century \citep{Thie95a,Jans16a}.\footnote{We are grateful to Xavier Mora and Svante Janson for pointing this out to us.} 
According to PAV, each voter cares about the whole committee, but the marginal gain of satisfaction of an already satisfied voter from an additional approved committee member is lower than the gain of a less satisfied voter. The reason for using the particular weight sequence $w_{\text{PAV}}$ is not obvious. \citet{ABC+17a} and \citet{SFF+17a} provide compelling arguments by showing that $w_{\text{PAV}}$ is the unique weight sequence $w$ such that $w$-Thiele satisfies certain axiomatic properties. \thmref{thm:pavApportionment2} in the present paper can be viewed as an additional---though related---argument in favor of the weight sequence~$w_{\text{PAV}}$.

\begin{definition}
The \emph{top-$k$ rule} is $w_{\text{top-$k$}}$-Thiele, where $w_{\text{top-$k$}} = (1, 1, 1, \ldots)$.
\end{definition}

According to the top-$k$ rule, the winning committee contains the $k$ candidates that have been approved by the greatest number of voters.
This very natural rule is called \emph{Simple Approval} by \cite{pavVoting}.

The following example illustrates the Thiele rules defined above.

\begin{example}\label{ex:owaBasedRules}
Let $k=3$ and consider the following preference profile with 12 voters:
\begin{align*}
& A_1 = \{c_1\} \quad & & A_5 = \{c_1, c_4, c_5, c_6\} \quad & & A_9 = \{c_2\} \\
& A_2 = \{c_1, c_3, c_5 \} \quad & & A_6 = \{c_1, c_4, c_5, c_6\} \quad & & A_{10} = \{c_3, c_5\} \\
& A_3 = \{c_1, c_5, c_6 \} \quad & & A_7 = \{c_1, c_4\} \quad & & A_{11} = \{c_3\} \\
& A_4 = \{c_1, c_5, c_6 \} \quad & & A_8 = \{c_2, c_4, c_6\} \quad & & A_{12} = \{c_3\}
\end{align*}
According to the Chamberlin--Courant rule, the unique winning committee is $\{c_1, c_2, c_3\}$. Indeed, for this committee each voter approves of at least one committee member (voters $1$--$7$ approve of~$c_1$, voters $8$ and $9$ approve of~$c_2$, and voters $10$--$12$ approve of $c_3$), thus each voter gets satisfaction equal to 1 from this committee. Clearly, this is the highest satisfaction that the voters can get from a committee. The reader might check that $\{c_1, c_2, c_3\}$ is the only committee for which each voter has at least one approved committee member. 

According to Proportional Approval Voting, $\{c_1, c_3, c_6\}$ is the unique winning committee. Let us compute the PAV satisfaction of voters from this committee. Voters 1, 7, 8, 10--12 approve of a single committee member, thus the satisfaction of these voters is equal to 1. Voter 9 does not approve any committee member, thus her satisfaction is equal to 0. The remaining 5 candidates approve of two committee members, so their satisfaction from the committee is equal to $1 + \frac{1}{2}$. Thus, the total satisfaction of the voters from $\{c_1, c_3, c_6\}$ is equal to $6 + 5\cdot \frac{3}{2} = 13.5$ and it can be checked that the satisfaction of voters from all other committees is lower.

The top-$k$ rule, on the other hand, selects $\{c_1, c_5, c_6\}$ as the unique winning committee. Indeed, $c_1$, $c_5$, and $c_6$ are approved by 7, 6, and 5 voters, respectively, and any other candidate is approved by less than 5 voters.
\end{example}

Other appealing Thiele rules include the \emph{$t$-best Thiele rule}, which is defined by the weight sequence $w = (1, \ldots, 1, 0, 0 \ldots)$ with $t$ ones followed by zeros, and the \emph{$t$-median Thiele rule}, which is defined by the weight sequence $w = (0, \ldots, 0, 1, 0, 0,\ldots)$, where the $1$ appears at position~$t$ \citep{owaWinner}.

\subsubsection{Sequential Thiele Rules}

Another interesting class of multiwinner rules, sometimes referred to as \emph{Reweighted Approval Voting (RAV)}, consists of \emph{sequential} variants of Thiele rules.

\begin{definition}
The rule \emph{sequential $w$-Thiele} (aka $w$-RAV) selects all committees that can result from the following procedure. Starting with the empty committee ($S = \emptyset$), in $k$ consecutive steps add to the committee $S$ a candidate $c$ that maximizes $\sum_{i \in N} u_i^w(S \cup \{c\}) - u_i^w(S)$. 
\end{definition}


Just like Thiele rules, sequential Thiele rules have already been considered by \citet{Thie95a}.
\emph{Sequential Proportional Approval Voting} (i.e., sequential $w_{\text{PAV}}$-Thiele) 
was used for a short period in Sweden during the early 1900s \citep{Jans16a}.
 
Interestingly, a sequential Thiele rule approximates the original Thiele rule whenever the weight sequence $w$ is non-increasing \citep{owaWinner}. 
Sequential Thiele rules are appealing alternatives to their original variants for a number of reasons. 
	First, they are computationally tractable, while finding winning committees for (non-sequential) Thiele rules is often $\np$-hard. 
	Second, sequential Thiele rules satisfy properties that are violated by non-sequential Thiele rules;
for instance, they (trivially) satisfy committee monotonicity while some of the non-sequential versions do not \citep[see, \eg][]{elk-fal-sko-sli:c:multiwinner-rules}.
	Third, sequential methods are easier to describe to non-experts, as compared to rules formulated as non-trivial combinatorial optimization problems. 

\begin{example}
Consider the instance from Example~\ref{ex:owaBasedRules}. Sequential Proportional Approval Voting selects candidate $c_1$ in the first iteration. In order to decide which candidate is selected in the second iteration, one must compare all remaining candidates with respect to their contribution to the total satisfaction of the voters. For instance, by selecting $c_3$ the satisfaction of voters will increase by $\frac{1}{2} + 3$. One can check that there are two optimal choices in the second step: selecting $c_3$ or selecting~$c_5$. If $c_3$ is selected in the second step, then our procedure will select $c_6$ in the third step (selecting $c_6$ increases the satisfaction of voters by $1 + 4\cdot \frac{1}{2}$). If $c_5$ is selected in the second step, then $c_3$ must be chosen in the third step (by choosing $c_3$ we increase the satisfaction of voters by $\frac{1}{3} + 5\cdot \frac{1}{2}$). Consequently, there are two winning committees according to Sequential Proportional Approval Voting: $\{c_1, c_3, c_5\}$ and $\{c_1, c_3, c_6\}$.
\end{example}

\subsubsection{Monroe's Rule}
\label{sec:monroedef}

The optimization problem underlying the Chamberlin--Courant rule can be thought of in terms of maximizing representation: every voter is assigned to a single candidate in the committee and this ``representative'' completely determines the satisfaction that the voter derives from the committee. The rule proposed by \citet{monroeElection} is based on the same idea; however, Monroe requires each candidate to represent the same number of voters. For the sake of simplicity, when considering the Monroe rule we assume that the number of voters $n$ is divisible by the size of the committee $k$. 

\begin{definition}
	Assume that $k$ divides $n$ and consider a size-$k$ committee $S$. A \emph{balanced allocation} of the voters to the candidates in $S$ is a function $\tau_S: N \to S$ such that $|\tau_S^{-1}(c)| = \frac{n}{k}$ for all $c \in S$. The satisfaction  $u_i(\tau_S)$ of a voter $i$ from $\tau_S$ is equal to one if $i$ approves of $\tau_S(i)$, and zero otherwise. The total satisfaction of voters provided by $S$, denoted $u(S)$, is defined as the satisfaction from the best balanced allocation, \ie $u(S) = \max_{\tau_S} \sum_{i \in N} u_i(\tau_S)$. The \emph{Monroe rule} selects all committees~$S$ maximizing $u(S)$.
\end{definition}

\begin{example}
Consider again the instance from Example~\ref{ex:owaBasedRules} and consider committee $\{c_1, c_2, c_3\}$, which is a winning committee under the Chamberlin--Courant rule. Even though every voter has an approved committee member (voters $1$--$7$ approve of~$c_1$, voters $8$ and $9$ approve of~$c_2$, and voters $10$--$12$ approve of $c_3$), according to Monroe this committee has a total satisfaction of 10 only:
since only two voters approve of $c_2$, there is no balanced allocation $\tau_{\{c_1, c_2, c_3\}}$ mapping every voter to an approved candidate in $\{c_1, c_2, c_3\}$. 
%
%
The Monroe rule selects $\{c_1, c_4, c_3\}$ as the unique winning committee. For this committee, the optimal balanced allocation maps voters $1$--$4$ to their approved candidate $c_1$, voters $5$--$8$ to their approved candidate~$c_4$, and voters~$9$--$12$ to~$c_3$ (only three of those four voters approve of~$c_3$). Thus, the winning committee has a total satisfaction of $11$.  
\end{example}

\subsubsection{\phrag's Rules}

In the late 19th century, the Swedish mathematician Lars Edvard \phrag proposed several methods for selecting committees based on approval votes \citep{Phra94a,Phra95a,Phra96a,Phra99a}. Here, we formulate two particularly interesting variants \citep[see][]{Jans16a,BFJL16a}. 

The motivation behind \phrag's rules is to find a committee whose ``support'' is distributed as evenly as possible among the electorate. For every candidate in the committee, one unit of ``load'' needs to be distributed among the voters approving this candidate. The goal is to find a committee of size $k$ for which the maximal load of a voter (or the variance of load distribution) is as small as possible. 

\begin{definition}
	Consider an instance $(A,k)$ of the multiwinner election problem. A load distribution for $(A,k)$ is a matrix $L = (\ell_{i,c})_{i \in N, c \in C} \in \reals_{\ge 0}^{|N|\times |C|}$ such that 
	\begin{enumerate}
		\item[(i)] $\sum_{i\in N} \sum_{c\in C}  \ell_{i,c} = k$,
		\item[(ii)] $\sum_{i\in N} \ell_{i,c} \in \{0,1\}$ for all $c\in C$, and 
		\item[(iii)] $\ell_{i,c} = 0$ if $c\notin A_i$.
	\end{enumerate}
Every load distribution $L$ corresponds to a size-$k$ committee $S_L=\{c\in C: \sum_{i\in N} \ell_{i,c}= 1\}$.
\end{definition}

The multiwinner election rules \maxP and \varP are defined as optimization problems over the set of all load distributions.\footnote{Minimizing the variance of voter loads is equivalent to minimizing $\sum_{i\in N} (\sum_{c\in C} \ell_{i,c})^2$.}

\begin{definition}
	The rule \emph{\maxP} maps an instance $(A,k)$ to the set of size-$k$ committees corresponding to load distributions $L$ minimizing $\max_{i\in N} \sum_{c\in C} \ell_{i,c}$.
	The rule {\varP} maps an instance $(A,k)$ to the set of size-$k$ committees corresponding to load distributions $L$ minimizing $\sum_{i\in N} (\sum_{c\in C} \ell_{i,c})^2$.
\end{definition}

\begin{example}\label{ex:phragmen}
	Let $k=3$ and consider the following preference profile with 5 voters:
	\begin{align*}
	& A_1 = \{c_1\}  \quad & &   A_3 = \{c_2, c_3\} \quad & &  A_5 = \{c_4\} \\
	& A_2 = \{c_2 \} \quad & &   A_4 = \{c_1, c_2, c_3\}  & &
	\end{align*}
	It can be checked that \maxP selects the committee $\{c_1,c_2,c_3\}$ and that \varP selects the committee $\{c_1,c_2,c_4\}$. Optimal load distributions corresponding to these committees are illustrated in \figref{fig:phragmen}.
	Load distributions minimizing the maximal voter load (like the one illustrated on the left in \figref{fig:phragmen}) satisfy 
	$\max_{i\in N} \sum_{c\in C} \ell_{i,c} = \frac{3}{4}$ and 
	$\sum_{i\in N} (\sum_{c\in C} \ell_{i,c})^2 = 4 \cdot \left(\frac{3}{4}\right)^2 = \frac{9}{4}$, 
	and the load distribution minimizing the variance of voter loads (illustrated on the right in \figref{fig:phragmen}) satisfies 
	$\max_{i\in N} \sum_{c\in C} \ell_{i,c} = 1$ and 
	$\sum_{i\in N} (\sum_{c\in C} \ell_{i,c})^2 = 4 \cdot \left(\frac{1}{2}\right)^2 + 1^2 = 2$. 
\end{example}

\begin{figure}
	\centering
\begin{tikzpicture}[yscale=.45,xscale=.4]
	\draw (0,7) rectangle node {$c_1$} +(6,1); 
	\draw (0,6) rectangle node {$c_2$} +(6,1);
	\draw (0,5) rectangle node {$c_3$} +(6,1);
	\draw (0,4) rectangle node {$c_1$} +(2,1); \draw (2,4) rectangle node {$c_2$} +(2,1); \draw (4,4) rectangle node {$c_3$} +(2,1);
	
	\draw (-0.5,7.5) node[left]{$A_1 = \{c_1\}$};
	\draw (-0.5,6.5) node[left]{$A_2 = \{c_2 \}$};
	\draw (-0.5,5.5) node[left]{$A_3 = \{c_2, c_3\}$};
	\draw (-0.5,4.5) node[left]{$A_4 = \{c_1, c_2, c_3\}$};
	\draw (-0.5,3.5) node[left]{$A_5 = \{c_4\}$};
	
	\draw[->] (0,3)--(7,3);
	\draw (0,3) -- (0,2.8)
	node[anchor=north] {$0$};
	\draw (2,3) -- (2,2.8)
	node[anchor=north] {$\frac{1}{4}$};
	\draw (4,3) -- (4,2.8)
	node[anchor=north] {$\frac{1}{2}$};
	\draw (6,3) -- (6,2.8)
	node[anchor=north] {$\frac{3}{4}$};
\end{tikzpicture}
\hfill
\begin{tikzpicture}[yscale=.5,xscale=.6]
	\draw (0,7) rectangle node {$c_1$} +(4,1); 
	\draw (0,6) rectangle node {$c_2$} +(4,1);
	\draw (0,5) rectangle node {$c_2$} +(4,1);
	\draw (0,4) rectangle node {$c_1$} +(4,1); 
	\draw (0,3) rectangle node {$c_4$} +(8,1); 
	
	\draw (-0.5,7.5) node[left]{$A_1 = \{c_1\}$};
	\draw (-0.5,6.5) node[left]{$A_2 = \{c_2 \}$};
	\draw (-0.5,5.5) node[left]{$A_3 = \{c_2, c_3\}$};
	\draw (-0.5,4.5) node[left]{$A_4 = \{c_1, c_2, c_3\}$};
	\draw (-0.5,3.5) node[left]{$A_5 = \{c_4\}$};
	
	\draw[->] (0,3)--(9,3);
	\draw (0,3) -- (0,2.8)
	node[anchor=north] {$0$};
	\draw (4,3) -- (4,2.8)
	node[anchor=north] {$\frac{1}{2}$};
	\draw (8,3) -- (8,2.8)
	node[anchor=north] {$1$};
\end{tikzpicture}
\caption{Illustration of Example \ref{ex:phragmen}. The diagram on the left illustrates a load distribution minimizing the maximal voter load, and the diagram on the right illustrates the unique load distribution minimizing the variance of voter loads.}
\label{fig:phragmen}
\end{figure}
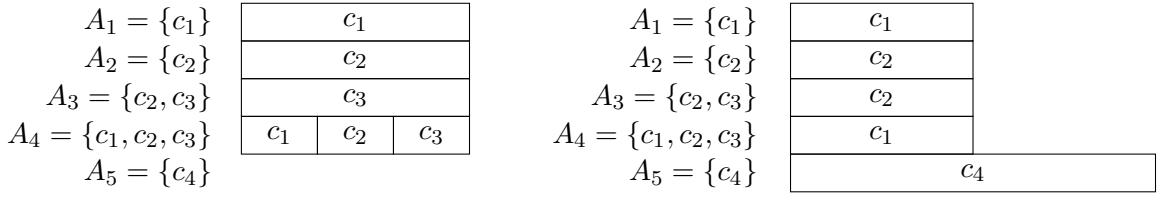

\section{Apportionment Via Multiwinner Election Rules}
\label{sec:apportionmentConstruction}

In this section we demonstrate how approval-based multiwinner rules can be employed as apportionment methods. For a given instance $(v,h)$ of the apportionment problem, this procedure involves three steps:
\begin{enumerate}
	\item Translate $(v,h)$ into an instance $(A,k)$ of the multiwinner problem. 
	\item Apply the multiwinner rule $\calR$ to $(A,k)$.
	\item Translate committee(s) in $\calR(A,k)$ into seat distribution(s) for $(v,h)$.
\end{enumerate} 

\noindent We now describe each step in detail. 
\paragraph{Step 1.}
 Given an instance $(v,h)$ of the apportionment problem, we construct an instance $(A,k)$ of the multiwinner problem as follows. For each $i \in [p]$, we introduce a set $C_i$ consisting of $h$ candidates, and a set $N_i$ consisting of $v_i$ voters. Each voter in $N_i$ approves of all candidates in $C_i$ (and of no other candidates).
Furthermore, we define $C=\bigcup_{i=1}^{p} C_i$, $N=\bigcup_{i=1}^{p} N_i$, and $k=h$.
Intuitively, $C_i$ is the set of members of party $P_i$ and $N_i$ is the set of voters who voted for party $P_i$.

\paragraph{Step 2.}
We can now apply multiwinner rule $\calR$ to $(A,k)$ in order to find the set $\calR(A,k)$ of winning committees.

\paragraph{Step 3.}
For every winning committee $S \in \calR(A,k)$, we can extract a seat distribution $x$ for the instance $(v,h)$ in the following way: the number $x_i$ of seats of a party $P_i$ is given by the number of candidates from $C_i$ in the committee $S$, i.e., $x_i = |C_i \cap S|$.

\medskip
The next example illustrates this three-step procedure. 

\begin{example}[Sequential PAV as an apportionment method]
Consider the instance $(v,h) = ((20,40,30,10),10)$ of the apportionment problem. We construct an instance of the multiwinner election setting with $40$ candidates; each party $P_i$ has a set $C_i$ consisting of $10$ candidates. Further, there are $\sum_{i=1}^4 v_i = 100$ voters: $20$ voters approve the ten candidates in $C_1$, $40$ voters approve the ten candidates in $C_2$, $30$ voters the ten candidates in $C_3$, and $10$ voters the ten candidates in~$C_4$. Applied to this instance, Sequential PAV selects a winning committee consisting of 
two candidates from $C_1$,
four candidates from $C_2$,
three candidates from $C_3$, and
one candidate from~$C_4$.
Thus, the resulting seat allocation is $(x_1, x_2, x_3, x_4) = (2, 4, 3, 1)$.
\end{example}

For a given multiwinner rule $\calR$, we let $M_\calR$ denote the apportionment method defined by steps~1 to~3. 
We say that a multiwinner election rule $\calR$ \emph{satisfies (lower) quota} if the corresponding apportionment method $M_\calR$ satisfies the respective property.

\subsection{Apportionment Methods Induced by Thiele Rules}

In this section, we consider apportionment methods induced by Thiele rules. Fix a weight sequence $w$ and let $\calR$ denote the rule $w$-Thiele. For every instance $(v,h)$ of the apportionment setting,  $M_{\calR}(v,h)$ contains all seat distributions $x$ maximizing the total voter satisfaction $u(x)$, which is given by 
	\begin{equation} \label{eq:total}
		u(x) = \sum_{i \in [p]} v_i u_i^w(x) =  \sum_{i \in [p]} v_i \left( \sum_{j=1}^{x_i} w_j \right) \text.
	\end{equation}
	Here, $u_i^w(x) = \sum_{j=1}^{x_i} w_j$ denotes the satisfaction of a voter from party $P_i$. 

	We let $\hat{u}(i,s)$ denote the marginal increase in $u(\cdot)$ when assigning the $s$-th seat to party $P_i$. Due to (\ref{eq:total}), we have $\hat{u}(i,s) = v_i w_s$.
	Taking away a seat from party $P_i$ and giving it to party $P_j$ results in a change of total voter satisfaction of
	\[ u(x_\move{i}{j}) - u(x)  = \hat u(j,x_j+1) - \hat u(i,x_i) = 
								  v_j w_{x_j+1} - v_i w_{x_i} \text. \]
	Optimality of $x \in M_{\calR}(v,h)$ implies that $u(x_\move{i}{j}) - u(x) \le 0$ for all $i,j \in [p]$ with $x_i>0$.

\medskip

Whenever the weight vector $w$ is non-increasing, $w$-Thiele induces the same apportionment method as its sequential variant.%
\footnote{Thiele rules with non-increasing weight sequences are very natural, especially when viewed in the context of apportionment. Thiele rules with \emph{increasing} sequences induce apportionment methods that allocate all seats to the single party that receives the most votes. The same seat distribution can be also obtained by using the Thiele rule with a constant weight vector (see Proposition~\ref{prop:topKapportionment}).} 

\begin{proposition}\label{prop:sequential}
	Let $w$ be a weight sequence with $w_j \ge w_{j+1} \ge 0$ for all $j \in \naturals$. The apportionment method induced by $w$-Thiele coincides with the apportionment method induced by sequential $w$-Thiele. 
\end{proposition}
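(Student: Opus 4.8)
The plan is to prove both inclusions between the two apportionment methods. Write $\calR$ for the $w$-based OWA rule and $\calR'$ for its sequential variant. By \eqnref{eq:total}, $M_{\calR}(v,h)$ is exactly the set of seat distributions maximizing $u(x)=\sum_{i\in[p]}v_i\sum_{j=1}^{x_i}w_j$. In the apportionment translation, where the candidates in each $C_i$ are interchangeable, the sequential rule reduces to the following greedy procedure: start from $(0,\dots,0)$ and repeatedly add one seat to some party $P_i$ maximizing the marginal gain $\hat u(i,s_i+1)=v_iw_{s_i+1}$, where $s_i$ is the number of seats $P_i$ currently holds; then $M_{\calR'}(v,h)$ is the set of all seat distributions obtainable by some run of this procedure, the choices arising only from ties. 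I would first record this reduction explicitly, so that the statement becomes a clean claim about greedy maximization of the separable function $u$.

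The structural fact to isolate is that $w$ non-increasing makes $u$ separably concave: for each party $i$ the marginal gains $v_iw_1\ge v_iw_2\ge\cdots$ are non-increasing. It is helpful to picture a pool of items in which party $i$ contributes items of values $v_iw_1,v_iw_2,\dots$; a seat distribution $x$ corresponds to taking the $x_i$ most valuable items of party $i$ for each $i$, with $u(x)$ the total value taken, and monotonicity within a party guarantees that the $h$ globally most valuable items already form such a prefix selection. For the inclusion $M_{\calR'}\subseteq M_{\calR}$ I would show by induction on the step count that after $t$ greedy steps the multiset of collected marginal gains is a multiset of $t$ largest values of the pool: by within-party monotonicity the value party $i$ offers at its current state, $v_iw_{s_i+1}$, is the largest not-yet-taken value of party $i$, so every greedy step removes a global maximum among the remaining items, regardless of how ties are resolved. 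Hence a greedy output $x$ satisfies $u(x)=$ (sum of the $h$ largest values of the pool) $\ge u(y)$ for any seat distribution $y$, since $u(y)$ is the total value of some $h$ items.

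For the converse inclusion $M_{\calR}\subseteq M_{\calR'}$, let $y$ maximize $u$. I would build a greedy run reaching $y$ while maintaining the invariant that the current state $s$ satisfies $s\le y$ coordinatewise; since the run takes $h=\sum_i y_i$ steps, this forces $s=y$ at termination. For the inductive step, assume $s\le y$ with $\sum_i s_i<h$ and let $m=\max_j v_jw_{s_j+1}$. Suppose, toward a contradiction, that every party attaining $m$ satisfies $s_j=y_j$. Pick any $i_0$ with $s_{i_0}<y_{i_0}$ (one exists since $\sum_i s_i<\sum_i y_i$; note $y_{i_0}\ge 1$, so $y_\move{i_0}{j_0}$ is well defined) and any $j_0$ attaining $m$. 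Using the already-derived formula $u(y_\move{i_0}{j_0})-u(y)=v_{j_0}w_{y_{j_0}+1}-v_{i_0}w_{y_{i_0}}$, together with $y_{j_0}=s_{j_0}$, with $y_{i_0}\ge s_{i_0}+1$ and $w$ non-increasing and $v_{i_0}>0$, and with $v_{i_0}w_{s_{i_0}+1}<m$ (as $i_0$ has a deficit, it does not attain $m$ under the standing assumption), one gets $u(y_\move{i_0}{j_0})-u(y)=m-v_{i_0}w_{y_{i_0}}\ge m-v_{i_0}w_{s_{i_0}+1}>0$, contradicting optimality of $y$. Hence some party $i$ with $s_i<y_i$ attains the maximal marginal gain, so the greedy procedure may assign the next seat to $P_i$, preserving $s\le y$.

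I expect the only genuine subtlety to be keeping the tie handling consistent on both sides: since $\calR'$ returns every committee reachable under some tie-breaking, the first inclusion has to be phrased in terms of multisets of item values rather than a fixed ordering of the greedy steps, and the second inclusion has to exhibit one concrete greedy run realizing a prescribed optimal $y$, which is exactly the role of the invariant $s\le y$. Everything else is elementary manipulation of the non-increasing sequence $w$ and the expression for $u(x_\move{i}{j})-u(x)$ established just before the proposition.
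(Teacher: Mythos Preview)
Your proposal is correct and rests on the same idea as the paper's proof: with non-increasing $w$ the marginals $\hat u(i,s)=v_iw_s$ are non-increasing in $s$, so the sequential rule greedily collects the $h$ largest marginal gains and therefore outputs exactly the maximizers of $u$. The paper argues this more tersely (recording only order-independence of $u$ and monotonicity of $\hat u(i,\cdot)$ before asserting the conclusion), whereas your invariant $s\le y$ together with the swap contradiction via $x_\move{i}{j}$ makes the inclusion $M_\calR\subseteq M_{\calR'}$ and the tie handling fully explicit.
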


\begin{proof}
	Fix a weight sequence $w = (w_1, w_2, \ldots)$ satisfying $w_j \ge w_{j+1} \ge 0$ for all $j \in \naturals$.
	Let~$\calR$ be $w$-Thiele and let $\calR'$ be sequential $w$-Thiele. We show that the apportionment methods induced by $\calR$ and $\calR'$ coincide, \ie $M_{\calR'} = M_{\calR}$.
	
	Consider an instance $(v,h)$ of the apportionment setting. Then $M_{\calR}(v,h)$ contains all seat distributions $x$ maximizing the total voter satisfaction $u(x)$, which is given by (\ref{eq:total}). 
	
	Now consider the apportionment method $M_{\calR'}$ induced by sequential $w$-Thiele. This method starts with the empty seat allocation $x^{(0)}= (0, \ldots, 0)$ and iteratively assigns a seat to a party $P_i$ such that the \emph{marginal increase} in total voter satisfaction is maximized.  
	Recall that $\hat{u}(i,s)$ denotes the marginal increase in $u(\cdot)$ when assigning the $s$-th seat to party $P_i$. Due to~(\ref{eq:total}), this quantity is independent of $x_j$ for $j \ne i$ and equals  $\hat{u}(i,s) = v_i w_s$.
	For $\ell=1, \ldots, h$, method $M_{\calR'}$ iteratively chooses a party $P_{i^*}$ with 
	$i^* \in \arg \max_{i \in [p]} \hat{u}(i,x^{(\ell-1)}_i + 1)$ and sets 
	$x^{(\ell)} = (x^{(\ell-1)}_1, \ldots, x^{(\ell-1)}_{i^*} + 1, \ldots, x^{(\ell-1)}_p)$. 
	
	For every seat allocation $x$, the total voter satisfaction $u(x)$ can be written as $u(x) = \sum_{i \in [p]} \sum_{s=1}^{x_i} \hat{u}(i,s)$. In particular, $u(x)$ is independent of the order in which seats are allocated to parties. Furthermore, our assumption $w_1 \ge w_2 \ge \ldots$ implies that the marginal satisfaction $\hat{u}(i,s)$ is monotonically decreasing in $s$ for every party $P_i$. Therefore, any seat distribution maximizing $u(\cdot)$ (\ie any $x \in M_\calR(v,h)$) can be iteratively constructed by applying method $M_{\calR'}$. 
\end{proof}

An immediate corollary is that Thiele rules with a non-increasing weight sequence, such as PAV, can be computed efficiently in the apportionment setting.

 Further, we can use \propref{prop:sequential} to show that every Thiele rule with a positive non-increasing weight sequence induces a divisor method. For a weight sequence $w = (w_1, w_2, \ldots)$ with $w_j>0$ for all $j \in \naturals$, let $d_w = (d_w(0), d_w(1), \ldots)$ be the sequence defined by $d_w(s) = \nicefrac{1}{w_{s+1}}$ for all $s \in \naturals \cup \{0\}$. For example, $w_\text{PAV}=(1,\nicefrac{1}{2}, \nicefrac{1}{3}, \ldots)$ yields $d_{w_\text{PAV}} = (1,2,3, \ldots)$.

\begin{theorem}\label{thm:owa-divisor}
	Let $w$ be a weight sequence with $w_j \ge w_{j+1} > 0$ for all $j \in \naturals$. 
	The apportionment method induced by $w$-Thiele coincides with the divisor method based on 
	$d_w=(\nicefrac{1}{w_1}, \nicefrac{1}{w_2}, \nicefrac{1}{w_3}, \ldots)$.
\end{theorem}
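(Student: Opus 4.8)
The plan is to derive this from \propref{prop:sequential} together with a step-by-step identification of two sequential procedures. Since the hypothesis $w_j \ge w_{j+1} > 0$ in particular implies $w_j \ge w_{j+1} \ge 0$, \propref{prop:sequential} applies and tells us that the apportionment method induced by the $w$-based OWA rule coincides with the one induced by the \emph{sequential} $w$-based OWA rule. Hence it suffices to show that the apportionment method induced by the sequential $w$-based OWA rule equals the divisor method based on $d_w = (\frac{1}{w_1}, \frac{1}{w_2}, \ldots)$.

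First I would verify that $d_w$ is a legitimate divisor sequence, so that the divisor method based on it is well defined: strict positivity $w_{s+1}>0$ gives $d_w(s) = \frac{1}{w_{s+1}} > 0$, and monotonicity $w_{s+1} \ge w_{s+2}$ gives $d_w(s) = \frac{1}{w_{s+1}} \le \frac{1}{w_{s+2}} = d_w(s+1)$, for all $s \in \naturals_0$. Next I would compare the two procedures directly. Both start from the empty allocation $(0,\ldots,0)$ and build a seat distribution through $h$ single-seat assignments. Suppose that after some number of steps the current allocation is $x$, with party $P_i$ holding $x_i$ seats. The divisor method based on $d_w$ assigns the next seat to a party maximizing $\frac{v_i}{d_w(x_i)} = v_i w_{x_i+1}$, while the sequential $w$-based OWA rule assigns it to a party maximizing the marginal increase in total satisfaction, which by~(\ref{eq:total}) is $\hat u(i,x_i+1) = v_i w_{x_i+1}$. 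The two selection criteria are literally the same function of the current state, so at every step the set of parties eligible to receive the next seat coincides for the two procedures. A straightforward induction on the number of completed steps then shows that the set of allocations reachable after $\ell$ steps is identical for both procedures; taking $\ell = h$ yields the claim.

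The only point that needs genuine care is the index bookkeeping: matching ``the $s$-th seat awarded to a party'' with the weight $w_s$ and with the divisor $d_w(s-1) = \frac{1}{w_s}$, so that the off-by-one between the OWA satisfaction increments and the divisor-sequence indexing lines up. Alongside this, one must keep the tie handling consistent — both rules are defined as returning the \emph{set} of all outcomes obtainable under some sequence of (tie-broken) choices — but since the sets of maximizers agree at every step, the induction above propagates this automatically. I do not expect any substantive obstacle beyond getting these conventions exactly right.
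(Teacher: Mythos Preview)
Your proposal is correct and follows essentially the same route as the paper: invoke \propref{prop:sequential} to reduce to the sequential rule, then observe that at each step both the sequential $w$-based OWA rule and the divisor method based on $d_w$ select a party maximizing $v_i w_{s_i+1} = \frac{v_i}{d_w(s_i)}$, so the two iterative procedures coincide. Your additional remarks (verifying that $d_w$ is a valid divisor sequence, the explicit induction over steps, and the treatment of ties) are sound elaborations but not different in substance from the paper's argument.
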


\begin{proof}
	Fix a weight sequence $w$ with $w_j \ge w_{j+1} > 0$ for all $j \in \naturals$. Let $\calR(w)$ denote the rule $w$-Thiele, and let $\calR'(w)$ denote sequential $w$-Thiele. It follows from \propref{prop:sequential} that $M_{\calR(w)}$ coincides with $M_{\calR'(w)}$.
	Furthermore let $M(d_w)$ denote the divisor method based on~$d_w$. 
	We are going to show that $M_{\calR'(w)}$ coincides with $M(d_w)$.
	
	Both $M(d_w)$ and $M_{\calR'(w)}$ work iteratively. 
	At each iteration, $M(d_w)$ assigns a seat to a party~$P_i$ maximizing $\frac{v_i}{d_w(s_i)}$, where $v_i$ is the number of votes for party $P_i$ and $s_i$ is the number of seats allocated to party $P_i$ in previous iterations. 
	Method $M_{\calR'(w)}$, on the other hand, assigns a seat to a party maximizing the marginal increase in total voter satisfaction. The marginal increase in total voter satisfaction when giving an additional seat to party $P_i$ equals $\hat{u}(i,s_i + 1) = v_i w_{s_i+1}$.

	Using $d_w(s) = \frac{1}{w_{s+1}}$, we can observe that the quantities $\left(\frac{v_i}{d_w(s_i)}\right)_{i \in [p]}$ used in divisor method $M(d_w)$ exactly coincide with the quantities $(v_i w_{s_i+1})_{i \in [p]}$ used in the method $M_{\calR'(w)}$. Thus, both methods assign seats in exactly the same way. 
\end{proof}

\subsection{Proportional Approval Voting as an Extension of the D'Hondt Method}

A particularly interesting consequence of \thmref{thm:owa-divisor} is that Proportional Approval Voting, the Thiele rule with $w_\text{PAV}=(1,\nicefrac{1}{2}, \nicefrac{1}{3}, \ldots)$, induces the \dhondt.

\begin{corollary} \label{cor:pav}
	The apportionment method induced by PAV coincides with the \dhondt.
\end{corollary}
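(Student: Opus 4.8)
The plan is to obtain the corollary as an immediate specialization of \thmref{thm:owa-divisor}. First I would check that the PAV weight sequence $w_\text{PAV} = (1, \tfrac{1}{2}, \tfrac{1}{3}, \ldots)$ meets the hypotheses of that theorem. Writing $w_j = \tfrac{1}{j}$, we have $w_j = \tfrac{1}{j} \ge \tfrac{1}{j+1} = w_{j+1} > 0$ for every $j \in \naturals$, so the sequence is strictly positive and non-increasing, and \thmref{thm:owa-divisor} applies to the $w_\text{PAV}$-based OWA rule, i.e.\ to PAV.

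Next I would compute the divisor sequence $d_{w_\text{PAV}}$ associated with $w_\text{PAV}$ through the definition $d_w(s) = \tfrac{1}{w_{s+1}}$ introduced just before the theorem. Since $w_{s+1} = \tfrac{1}{s+1}$, this yields $d_{w_\text{PAV}}(s) = s+1$ for all $s \in \naturals_0$, that is, $d_{w_\text{PAV}} = (1, 2, 3, \ldots)$. This is exactly the sequence $d_{\text{D'Hondt}}$ that defines the \dhondt.

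Finally, \thmref{thm:owa-divisor} states that the apportionment method induced by the $w_\text{PAV}$-based OWA rule coincides with the divisor method based on $d_{w_\text{PAV}}$; since $d_{w_\text{PAV}} = (1,2,3,\ldots) = d_{\text{D'Hondt}}$, this divisor method is by definition the \dhondt, which establishes the claim.

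I do not anticipate any real obstacle: the corollary is a direct instantiation of \thmref{thm:owa-divisor}, and the substantive work has already been carried out in \propref{prop:sequential} and \thmref{thm:owa-divisor}. The only point requiring mild care is the index shift between the OWA weights (indexed from $1$) and the divisor sequence (indexed from $0$), which is precisely accounted for by the relation $d_w(s) = \tfrac{1}{w_{s+1}}$; verifying that $w_\text{PAV}$ lands on the sequence $(1,2,3,\ldots)$ rather than, say, $(2,3,4,\ldots)$ is the one place to double-check.
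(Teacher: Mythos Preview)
Your proposal is correct and matches the paper's approach exactly: the corollary is stated without a separate proof, as the paper has already noted (just before \thmref{thm:owa-divisor}) that $w_\text{PAV}=(1,\tfrac{1}{2},\tfrac{1}{3},\ldots)$ yields $d_{w_\text{PAV}}=(1,2,3,\ldots)$, and then presents the corollary as an immediate consequence of that theorem. Your explicit verification of the hypotheses and the index shift is a bit more detailed than what the paper spells out, but the argument is identical.
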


The observation that PAV reduces to the \dhondt in the party-list setting occasionally occurs (without proof) in the literature \citep[\eg see][]{Plaz04a,Pere16a,SFF16a,BKP17a}.\footnote{In fact, already \citet{Thie95a} stated this equivalence, but only in the special case when perfect proportionality is possible; see the survey by \citet{Jans16a}.} \thmref{thm:owa-divisor} shows that this is just one special case of the general relationship between Thiele rules and divisor methods \citep[see also][Remark 11.4]{Jans16a}. 

Since the \dhondt satisfies lower quota, the same holds for PAV (and Sequential PAV) in the apportionment setting.

\begin{corollary}
	PAV and Sequential PAV satisfy lower quota. 
\end{corollary}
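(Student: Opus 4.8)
The plan is to obtain the statement as a direct consequence of results already in hand, chaining together three facts. First, recall that the paper has already observed---citing \citet{BaYo82a}---that the \dhondt respects lower quota (in fact, it is the \emph{unique} divisor method with this property). Second, \coref{cor:pav} asserts that the apportionment method $M_{\text{PAV}}$ induced by PAV coincides with the \dhondt. Since, by definition, a multiwinner rule $\calR$ ``satisfies lower quota'' precisely when the induced apportionment method $M_\calR$ does, combining these two facts immediately gives the claim for PAV.

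For Sequential PAV, I would invoke \propref{prop:sequential} (or, equivalently, \thmref{thm:owa-divisor}). The relevant step is simply to check that the weight sequence $w_{\text{PAV}} = (1, \tfrac{1}{2}, \tfrac{1}{3}, \ldots)$ meets the hypotheses, which it does trivially: $w_j = \tfrac{1}{j} \ge \tfrac{1}{j+1} = w_{j+1} > 0$ for every $j \in \naturals$. Hence \propref{prop:sequential} applies and the apportionment method induced by Sequential PAV coincides with the one induced by PAV, namely the \dhondt. Therefore Sequential PAV satisfies lower quota as well.

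There is no substantive obstacle here: the entire argument is a recombination of \coref{cor:pav}, \propref{prop:sequential}, and the classical characterization of the \dhondt as the lower-quota divisor method. The only point requiring any care---and it is a one-line verification---is confirming that $w_{\text{PAV}}$ is strictly positive and non-increasing, so that the cited propositions are genuinely applicable. If one wished to make the corollary self-contained rather than relying on \citet{BaYo82a}, one could instead argue directly that the \dhondt (equivalently, the sequential $w_{\text{PAV}}$-based rule, by \thmref{thm:owa-divisor}) never allocates party $P_i$ fewer than $\lfloor \tfrac{v_i h}{v_+} \rfloor$ seats, by showing that as long as party $P_i$ has received fewer than $\lfloor \tfrac{v_i h}{v_+}\rfloor$ seats its current quotient $\tfrac{v_i}{s_i+1}$ exceeds $\tfrac{v_+}{h}$, while the total number of ``large'' quotients (those strictly above $\tfrac{v_+}{h}$) is at most $h$; but given that the needed fact about the \dhondt is already stated in the excerpt, the short proof via \coref{cor:pav} suffices.
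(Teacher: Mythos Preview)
Your proposal is correct and mirrors the paper's own argument exactly: the paper simply notes that the \dhondt satisfies lower quota and invokes \coref{cor:pav} (together with \propref{prop:sequential} for the sequential variant) to conclude. Your verification that $w_{\text{PAV}}$ is positive and non-increasing is the only detail to check, and you handle it correctly.
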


The fact that PAV satisfies lower quota can also be established by observing that every multiwinner rule satisfying \emph{extended justified representation} \citep{ABC+17a} or \emph{proportional justified representation} \citep{SFF+17a} also satisfies lower quota. See Appendix \ref{sec:jr} for details.

Further, we show that PAV is the \emph{only} Thiele rule satisfying lower quota.%
\footnote{This result also follows from \thmref{thm:owa-divisor} together with the characterization of the \dhondt as the only divisor method satisfying lower quota \citep[][Proposition 6.4]{BaYo82a}. 
We give a direct proof for completeness.}

\begin{theorem}\label{thm:pavApportionment2}
PAV is the only Thiele rule satisfying lower quota.
\end{theorem}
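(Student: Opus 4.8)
The easy half of the theorem---that PAV itself satisfies lower quota---is the corollary established just above, so the task is to show that \emph{every} OWA-based rule satisfying lower quota is PAV. Fix a weight sequence $w=(w_1,w_2,\ldots)$, let $\calR$ be the $w$-based OWA rule, and assume $M_{\calR}$ respects lower quota. The plan is to prove that $w$ is a positive multiple of $w_{\text{PAV}}=(1,\tfrac12,\tfrac13,\ldots)$; since rescaling a weight sequence does not change the rule, this identifies $\calR$ with PAV. The single mechanism driving every step is this: by~(\ref{eq:total}), $M_{\calR}(v,h)$ is the set of \emph{all} seat distributions maximizing $u(\cdot)$, so whenever lower quota admits only one seat distribution $x^\star$ for an instance, $x^\star$ must be the \emph{unique} maximizer of $u$---a mere tie at some other distribution would already force $M_{\calR}$ to output something violating lower quota. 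I feed this principle a handful of two-party instances with no quota slack.

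\emph{Monotonicity and positivity of $w$.} For each $j\ge 1$, take two parties with one vote each and $h=2j$ seats; here $\lfloor v_ih/v_+\rfloor=j$ for both parties, so lower quota forces $x^\star=(j,j)$. Since $u(j+1,j-1)-u(j,j)=w_{j+1}-w_j$, uniqueness of the maximizer yields $w_j>w_{j+1}$. For each $t\ge 1$, take parties with $t+1$ and $t$ votes and $h=2t+1$ seats; then $\lfloor v_1h/v_+\rfloor=t+1$ and $\lfloor v_2h/v_+\rfloor=t$, so lower quota forces $x^\star=(t+1,t)$, and from $u(t,t+1)-u(t+1,t)=-w_{t+1}$ uniqueness yields $w_{t+1}>0$. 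Combining the two families, $w_1>w_2>w_3>\cdots>0$.

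\emph{Identifying the sequence.} Rescale so that $w_1=1$. The sequence now satisfies the hypotheses of \thmref{thm:owa-divisor}, so $M_{\calR}$ coincides with the divisor method with divisor sequence $d_w=(1,\tfrac1{w_2},\tfrac1{w_3},\ldots)$; since $M_{\calR}$ respects lower quota, so does this divisor method, and by the Balinski--Young characterization quoted earlier (the \dhondt is the only divisor method respecting lower quota) $M_{\calR}$ \emph{is} the \dhondt. Finally, for each fixed $t\ge 2$ apply the \dhondt to the instance $v=(t,1)$, $h=t$: party~$1$'s first $t-1$ quotients $\tfrac t1,\ldots,\tfrac t{t-1}$ strictly dominate every other quotient, after which party~$1$'s quotient $\tfrac tt$ ties with party~$2$'s quotient $\tfrac11$, so the \dhondt's output is exactly $\{(t,0),(t-1,1)\}$---and \emph{both} distributions occur. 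Hence both maximize $u$, so $u(t,0)=u(t-1,1)$; since $u(t,0)-u(t-1,1)=t\,w_t-w_1=t\,w_t-1$, this gives $w_t=\tfrac1t$. Thus $w=w_{\text{PAV}}$ up to scaling, and $\calR$ is PAV.

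The individual computations are all one-liners; the delicacy is entirely in the instance design. The recurring trick is to clear denominators so that $v_+\mid v_ih$ for every party, which collapses lower quota to a single admissible seat distribution and thereby converts ``respects lower quota'' into ``$x^\star$ is the unique $u$-maximizer''; one must also keep in mind that, because $M_{\calR}$ returns \emph{every} maximizer, a tie at a bad distribution is already a violation. The one place a divisor method genuinely earns its keep is the last step, where routing through \thmref{thm:owa-divisor} lets one borrow the Balinski--Young uniqueness of the \dhondt rather than reprove it; a fully self-contained argument is possible but would in effect redo that proof, using instances with one large party and several unit parties to show that $t\,w_t\neq 1$ pushes some unit party below its floor.
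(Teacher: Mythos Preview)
Your proof is correct, but it proceeds along a different line from the paper's own argument---in fact along the line the paper explicitly mentions in its footnote as an alternative (``This result also follows from \thmref{thm:owa-divisor} together with the characterization of the \dhondt as the only divisor method satisfying lower quota''). The paper instead gives a self-contained proof: for each $j$ it builds two families of $(Z{+}1)$-party instances, uses lower quota plus pigeonhole to bound $w_j$ from above and below by $\tfrac{w_1}{j}\cdot\tfrac{Z}{Z-1}$ and $\tfrac{w_1}{j}\cdot\tfrac{Z-1}{Z}$, and lets $Z\to\infty$.

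What each approach buys: the paper's argument is direct and uses nothing beyond the definition of lower quota and an optimality swap; it never needs monotonicity of $w$ or any divisor-method machinery. Your route is shorter once \thmref{thm:owa-divisor} and the Balinski--Young characterization are on the table, and your two-party instances with exact quotas supply something the paper's footnote glosses over---namely that lower quota forces $w_1>w_2>\cdots>0$, without which \thmref{thm:owa-divisor} could not be invoked. Your final step (the tied \dhondt instance $v=(t,1)$, $h=t$, yielding $tw_t=w_1$) is a clean way to read off the weights, though it is strictly speaking redundant: once $M_\calR$ equals the \dhondt as an apportionment method, the divisor sequence $d_w$ must be proportional to $(1,2,3,\ldots)$, which already gives $w\propto w_{\text{PAV}}$. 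Your closing remark about ``one large party and several unit parties'' is essentially a sketch of the paper's direct proof.
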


\begin{proof}
Let $\calR$ be a Thiele rule such that $M_\calR$ satisfies lower quota. We will show that $\calR$ is based on a weight sequence $w$ with $w_j = \nicefrac{w_1}{j}$ for all $j \in \naturals$, from which we will infer that $M_\calR$ is equivalent to PAV. 

Fix $j \in \naturals$. We will show $w_{j} = \nicefrac{w_1}{j}$ in two steps. 

\paragraph{Step 1.}
Given a natural number $Z>1$, we define an instance $(v,h)$ of the apportionment problem as follows. There are $Z+1$ parties and the vote distribution is given by $v = (Z, jZ-1, \ldots, jZ-1)$. That is, party $P_1$ gets $Z$ votes and all other parties get $jZ-1$ votes. Thus, the total number of votes is $v_+ = Z + Z(jZ - 1) = jZ^2$. Furthermore, we set $h = \frac{v_+}{Z} = jZ$. 

Consider a seat allocation $x \in M_\calR(v,h)$. Since $M_\calR$ satisfies lower quota, we have 
$x_1 \ge \left\lfloor \frac{Z \cdot jZ}{Z^2j}\right\rfloor = 1$ and 
$x_\ell \ge \left\lfloor \frac{(jZ-1) \cdot jZ}{Z^2j}\right\rfloor = \left\lfloor \frac{(jZ-1)}{z} \right\rfloor = j-1$ for all $\ell \ne 1$. Thus, $P_1$ is allocated at least one seat, and each of the other $Z$ parties is allocated at least $j-1$ seats. From the pigeonhole principle we infer that at least one of the parties $P_2, \ldots, P_{Z+1}$ gets \emph{exactly} $j-1$ seats. Let $P_\ell$ be such a party and consider the seat allocation $x_\move{1}{\ell}$. Since $x$ maximizes $u(\cdot)$, we have
\[ 0 \ge u(x_\move{1}{\ell}) - u(x) = \hat u(\ell,j) - \hat u(1,1) = (jZ-1) w_j - z w_1 \text. \]
It follows that $w_{j} \leq \frac{w_1}{j} \cdot \frac{Z}{Z - 1}$. Since this inequality has to hold for all natural numbers $Z>1$, we infer that $w_{j} \leq \frac{w_1}{j}$.

\paragraph{Step 2.}
Next, we construct another instance $(v',h')$ of the apportionment problem, again parameterized by a natural number $Z>1$. There are $Z+1$ parties and the vote distribution $v'$ is given by $v' = (jZ, Z-1, \ldots, Z-1)$. Thus, $v'_+ = Z^2 - Z + jZ$. The number of seats is given by $h' = \frac{v'_+}{Z} = Z + j - 1$. 

Consider a seat allocation $x \in M_\calR(v',h')$. Since $M_\calR$ satisfies lower quota, we infer that 
$x_1 \ge \left\lfloor \frac{jZ  (Z+j-1)}{Z(Z+j-1)} \right\rfloor = j$. The pigeonhole principle implies that at least one of parties $P_2, \ldots, P_{Z+1}$ gets no seat. Let $P_\ell$ be such a party and consider $x_\move{1}{\ell}$. We have 
\[ 0 \ge u(x_\move{1}{\ell}) - u(x) = \hat u(\ell,1) - \hat u(1,j) = (Z-1)w_1 - jZ \cdot w_j \text, \] 
and thus $w_{j} \ge \frac{w_1}{j} \cdot \frac{Z-1}{Z}$. Again, this inequality holds for all $Z$, and thus $w_{j} \geq \frac{w_1}{j}$.

\medskip
We have therefore shown that $w_{j} = \frac{w_1}{j}$ for all $j \in \naturals$. It follows that $w = c \cdot w_{\text{PAV}}$ for some constant $c>0$. As a consequence, $\calR$ is equivalent to PAV.
\end{proof}

\thmref{thm:pavApportionment2} characterizes PAV as the only Thiele rule respecting lower quota. Since PAV does not respect (exact) quota, it follows that no Thiele rule respects quota. Due to \propref{prop:sequential}, the same is true for sequential Thiele rules.

The load-balancing rule \maxP also induces the \dhondt. Indeed, \citeauthor{Phra95a} formulated his rule as a generalization of the \dhondt \citep{MoOl15a,Jans16a}. 

\begin{theorem} \label{thm:maxP}
	The apportionment method induced by \maxP coincides with the \dhondt.
\end{theorem}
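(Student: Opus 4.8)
The plan is to reduce the behaviour of \maxP on the party-list instance built in Step~1 to a purely arithmetic minimisation over seat distributions, and then to identify that minimisation with the \dhondt.

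First I would describe the load distributions that \maxP optimises over. In the instance $(A,k)$ associated with $(v,h)$, every voter of $N_i$ approves exactly the candidates of $C_i$, and every candidate of $C_i$ is approved only by voters of $N_i$; hence in any load distribution the load placed on the committee members drawn from $C_i$ is carried entirely by the $v_i$ voters of $N_i$. If a size-$k$ committee $S$ contains $x_i = |S \cap C_i|$ members of $C_i$, those members carry a total load of exactly $x_i$, so some voter of $N_i$ has load at least $\frac{x_i}{v_i}$; and this is attained with equality by spreading each committee member's unit of load uniformly over $N_i$, which leaves every voter of $N_i$ with load exactly $\frac{x_i}{v_i}$. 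Since $N_1, \ldots, N_p$ partition $N$ and the loads attached to members of different parties can be chosen independently, the least maximal voter load over all load distributions with committee $S$ equals $\max_{i \in [p]} \frac{x_i}{v_i}$. As $|C_i| = h \ge x_i$, every vector $x \in \naturals_0^p$ with $\sum_{i \in [p]} x_i = h$ is realised by some committee, so $M_{\maxP}(v,h)$ is exactly the set of such $x$ minimising $\max_{i \in [p]} \frac{x_i}{v_i}$.

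Next I would establish the two inclusions between this set and the set of \dhondt seat distributions for $(v,h)$. For one direction I would invoke the standard characterisation of divisor methods \citep[][Proposition~3.3]{BaYo82a}: a seat distribution $x$ with $\sum_i x_i = h$ is a \dhondt outcome if and only if $\max_i \frac{x_i}{v_i} \le \min_i \frac{x_i + 1}{v_i}$. Given such an $x$ and any $x'$ with $\sum_i x'_i = h$ and $\max_i \frac{x'_i}{v_i} < \max_i \frac{x_i}{v_i}$, every coordinate would satisfy $\frac{x'_\ell}{v_\ell} < \frac{x_\ell + 1}{v_\ell}$, \ie $x'_\ell \le x_\ell$; as both distributions sum to $h$ this forces $x' = x$, a contradiction, so every \dhondt outcome minimises $\max_i \frac{x_i}{v_i}$. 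For the converse I would argue by a seat transfer: if $x$ fails the characterisation, pick $P_i,P_j$ with $x_j > 0$ and $\frac{x_j}{v_j} > \frac{x_i + 1}{v_i}$ and move one seat from $P_j$ to $P_i$, obtaining $x_{\move{j}{i}}$; the two modified ratios $\frac{x_j - 1}{v_j}$ and $\frac{x_i + 1}{v_i}$ are both strictly below $\frac{x_j}{v_j} \le \max_i \frac{x_i}{v_i}$, so the transfer does not increase $\max_i \frac{x_i}{v_i}$, while it strictly lowers the descending-sorted vector of voter loads in lexicographic order. Hence an outcome that is lexicographically minimal among minimisers of $\max_i \frac{x_i}{v_i}$ satisfies the \dhondt characterisation.

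The step I expect to be the main obstacle is exactly this tie-handling. The bare min--max objective does not on its own single out the \dhondt outcomes: in the presence of ties several seat distributions can attain the minimal maximal voter load, so matching the two rules requires being precise about how \maxP and the iterative \dhondt break ties (by comparing the second-largest voter load, then the third, and so on). I would therefore work with the lexicographic refinement of the load profile throughout, verify that the transfer $x \mapsto x_{\move{j}{i}}$ above is \emph{strictly} improving in that refined order precisely when the inequality $\frac{x_i+1}{v_i} \ge \frac{x_j}{v_j}$ fails, and then carefully reconcile this refinement with the tie-breaking of the iterative \dhondt run. This bookkeeping -- arguing that the lexicographically optimal load profiles are exactly those corresponding to \dhondt allocations -- is where the real work of the proof lies; the rest is the routine reduction and exchange argument sketched above.
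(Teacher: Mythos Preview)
Your reduction of \maxP in the party-list setting to the minimisation of $\max_{i \in [p]} \frac{x_i}{v_i}$ over seat distributions is exactly the paper's argument, and your justification for it is correct and somewhat more detailed than the paper's one-line appeal to uniform load-sharing within each~$N_i$.

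Where you diverge is in the second half. The paper does not re-prove that the minimisers of $\max_i \frac{x_i}{v_i}$ are the \dhondt allocations; it simply invokes \citet[][Proposition~3.10]{BaYo82a}. So the exchange argument, the min--max characterisation of the \dhondt via Proposition~3.3, and above all the tie-handling you flag as the ``main obstacle'' are entirely absent from the paper's proof: they are absorbed into that single citation.

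Your tie concern is nonetheless genuine relative to the paper's own definitions. With \maxP defined by the bare min--max objective (as here, with no lexicographic refinement), there can be minimisers of $\max_i \frac{x_i}{v_i}$ that the iterative \dhondt never reaches---take $v=(3,1,1)$, $h=3$, where $(1,1,1)$ attains the minimal maximum ratio $1$ yet is not producible by any \dhondt run, which must give the first two seats to $P_1$. Your proposed lexicographic refinement is therefore not redundant bookkeeping but a silent strengthening of the rule needed for literal set equality. The paper does not engage with this; it treats the cited result as settling the matter. To match the paper's proof you need only the reduction plus the citation; the remainder of your proposal is more careful than the paper itself is on this point.
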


\begin{proof}
In the apportionment setting, optimal load distributions have a very simple structure. Given a seat distribution $x$, it is clearly optimal to distribute the load of $x_i$ ($1$ for each seat that is allocated to party $P_i$) uniformly among the $v_i$ voters of the party. Therefore, the maximal voter load for~$x$ is given by $\max_{i \in [p]} \frac{x_i}{v_i}$. By definition, the \dhondt selects seat distributions minimizing this quantity \citep[see][Proposition~3.10]{BaYo82a}. 
\end{proof}

There is also a \emph{sequential} variant of \phrag's rule.\footnote{In fact, the sequential variant is the main method that \phrag proposed to be used in actual elections; see the survey by \citet{Jans16a} for details.}
It is straightforward to verify that, in the apportionment setting, this variant coincides with \maxP and thus also induces the \dhondt.  

\subsection{Monroe's Rule as an Extension of the \HAMILTON}

We now turn to Monroe's multiwinner rule. It turns out that it induces the \hamilton. 

\begin{theorem}\label{thm:monroeApportionment}
Assume that the number of seats divides the total number of voters. Then, Monroe's rule induces the \hamilton.
\end{theorem}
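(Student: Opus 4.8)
The plan is to reduce the statement to a transparent optimisation question about seat distributions. Fix an apportionment instance $(v,h)$ with $h \mid v_+$ and let $(A,k)$ be the multiwinner instance produced by Step~1, so that $k=h$, $n=v_+$, and $n/k = v_+/h$ is a positive integer (this is exactly where the divisibility hypothesis is used: it guarantees that balanced allocations exist and that all the capacities below are integral). Since every candidate in $C_i$ is approved by precisely the voters in $N_i$, the Monroe score $u(S)$ depends only on the induced seat distribution $x=(|C_1\cap S|,\ldots,|C_p\cap S|)$; I will write it $u(x)$. The first step is to prove the closed form
\[ u(x) \;=\; \sum_{i=1}^{p}\min\!\Bigl(v_i,\; x_i\cdot\tfrac{v_+}{h}\Bigr). \]
The direction ``$\le$'' is immediate: in any balanced allocation $\tau_S$, the number of satisfied voters from $N_i$ is at most $v_i$ and at most $x_i\cdot(n/k)$, the combined capacity of the $x_i$ seats held by $P_i$. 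For ``$\ge$'' I would construct an optimal $\tau_S$ explicitly: on the seats of each ``surplus'' party ($v_i\ge x_i n/k$) place only its own voters; on the seats of each ``deficit'' party ($v_i< x_i n/k$) place all of its own voters together with just enough voters borrowed from surplus parties to bring every such seat up to exactly $n/k$ voters. This is feasible because $\sum_i v_i = n = \sum_i x_i(n/k)$, so the total surplus of voters equals the total deficit of capacity. I expect this achievability argument to be the main (though routine) technical point.

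Next, set $q_i = v_i h / v_+$ (the quota of $P_i$) and $r_i = q_i-\lfloor q_i\rfloor$, noting $\sum_i q_i = h$. Dividing $u(x)$ by the positive constant $v_+/h$ shows that the apportionment method induced by Monroe's rule returns exactly the seat distributions $x$ (with $\sum_i x_i=h$, $x_i\in\naturals_0$) maximising $f(x)=\sum_{i=1}^{p}\min(q_i,x_i)$. An exchange argument then shows every maximiser satisfies $x_i\ge\lfloor q_i\rfloor$ for all $i$: if $x_i\le\lfloor q_i\rfloor-1$ then, since $\sum_\ell x_\ell=\sum_\ell q_\ell$, some party $P_j$ has $x_j\ge\lceil q_j\rceil$, and passing one seat from $P_j$ to $P_i$ changes $f$ by at least $1-r_j>0$, contradicting optimality.

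Finally, write $x_i=\lfloor q_i\rfloor+y_i$ with $y_i\ge 0$ and $\sum_i y_i = R := h-\sum_i\lfloor q_i\rfloor$. Because $\lfloor q_i\rfloor+1\ge\lceil q_i\rceil\ge q_i$, a direct computation gives $f(x)=h-\sum_{i:\,y_i=0}r_i$, so a maximiser is precisely an $x$ for which $\{i:y_i=0\}$ minimises the sum of remainders over it. As at most $R$ parties can have $y_i\ge 1$, this set has size at least $p-R$; and the elementary observation that the $R$-th largest remainder is strictly positive whenever $R\ge 1$ (it must be, since $\sum_i r_i=R$ while each $r_i<1$) shows that any such set of size strictly greater than $p-R$ has strictly larger remainder sum than the minimum. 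Hence every maximiser gives exactly $R$ parties one extra seat, namely---up to ties, which affect $f$ and the \hamilton in the same way---the $R$ parties with the largest remainders. That is exactly the \hamilton allocation (and when $R=0$ all quotas are integral and both sides return the unique distribution $x_i=q_i$), which proves that the apportionment method induced by Monroe's rule coincides with the \hamilton.
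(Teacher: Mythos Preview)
Your argument is correct and shares its starting point with the paper: both establish the closed form $u(x)=\sum_i \min(v_i,\,x_i v_+/h)$ and then proceed by exchange arguments. The difference is in the second half. The paper runs three separate seat-swap contradictions---one to obtain $x_i\ge\lfloor q_i\rfloor$, one to obtain $x_i\le\lceil q_i\rceil$, and a third to show that the extra seats go to the parties with largest remainders. You instead do a single exchange to secure $x_i\ge\lfloor q_i\rfloor$, write $x_i=\lfloor q_i\rfloor+y_i$ with $y_i\ge 0$, and then use the identity $f(x)=h-\sum_{i:\,y_i=0} r_i$ to read off the maximisers directly. This finish is more compact: it subsumes the paper's Steps~2 and~3 in one computation, and the observation that the $R$-th largest remainder is strictly positive (forced by $\sum_i r_i=R$ with each $r_i<1$) is exactly what pins down that every optimiser spreads the $R$ surplus seats over $R$ distinct parties. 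One small wording point: in your lower-quota exchange, the pigeonhole really gives $x_j>q_j$ strictly (from $x_i<q_i$ and $\sum_\ell x_\ell=\sum_\ell q_\ell$), and this is what you need---if you only had $x_j\ge\lceil q_j\rceil$ and $q_j$ were integral with $x_j=q_j$, the swap would change $f$ by $0$, not by $1-r_j$. With the strict inequality in hand, your bound $1-r_j>0$ holds in all cases.
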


\begin{proof}
Let $\calR$ denote Monroe's rule. Recall that Monroe's rule assigns voters to candidates in a balanced way, so as to maximize the total voter satisfaction. In the apportionment setting, a voter in $N_i$ is satisfied if and only if she is assigned to a candidate in $C_i$. The apportionment method $M_\calR$ selects seat distributions maximizing the total voter satisfaction. 

For a seat allocation $x$, let $u(x)$ denote the total voter satisfaction provided by $x$. We can write 
$u(x) = \sum_{i \in [p]} u(i,x)$, where $u(i,x)$ is the total satisfaction that voters in $N_i$ derive from $x$. For a given instance $(v,h)$ of the apportionment problem, $u(i,x)$ can be expressed as
\[
u(i,x) = \begin{cases}
			v_i 			  &\text{if $x_i \ge v_i \frac{h}{v_+}$,} \\
			x_i \frac{v_+}{h} &\text{if $x_i < v_i \frac{h}{v_+}$.}
		 \end{cases}
\]
Since the case $u(i,x) = v_i$ occurs if and only if $v_i \le x_i \frac{v_+}{h}$, we have $u(i,x) = \min(v_i,x_i \frac{v_+}{h})$ and
\begin{equation*}\label{eq:monroe-sat}
	u(x) = \sum_{i \in [p]} u(i,x) = \sum_{i \in [p]} \min \left(v_i,x_i \frac{v_+}{h} \right) \text.
\end{equation*}

We show that $M_\calR$ coincides with the \hamilton for all instances $(v,h)$ such that $v_+$ divides $h$. Fix such an instance $(v,h)$ and let $x \in M_\calR(v,h)$. The proof consists of three steps. 

\paragraph{Step 1.} We first show that $x_i \ge \lfloor \frac{v_i h}{v_+} \rfloor$ for all parties $P_i$. Assume for contradiction that this is not the case and let $P_i$ be a party with $x_i < \lfloor \frac{v_i h}{v_+} \rfloor$. We infer that $x_i \leq \frac{v_i h}{v_+} - 1$ and, by the pigeonhole principle, that there exists a party $P_j$ such that $x_j > \frac{v_j h}{v_+}$. Thus, $u(i,x) = x_i \frac{v_+}{h}$ and $u(j,x) = v_j$.
Therefore, 
\begin{align*}
u(x_\move{j}{i}) - u(x) &= u(i,x_\move{j}{i}) + u(j,x_\move{j}{i}) - u(i,x) - u(j,x) \\
             &=  (x_i + 1)\frac{v_+}{h} + \min \left(v_j, (x_j - 1)\frac{v_+}{h} \right) -  v_j - x_i \frac{v_+}{h} \\
             &= \frac{v_+}{h} + \min \left(v_j,(x_j - 1)\frac{v_+}{h} \right) -  v_j
			 \ge \frac{v_+}{h} > 0 \text,
\end{align*}
contradicting our assumption that $x \in M_\calR(v,h)$.

\paragraph{Step 2.} We next show that $x_i \le \lfloor \frac{v_i h}{v_+} \rfloor$ for all parties $P_i$. Assume for contradiction that this is not the case and let $P_i$ be a party with $x_i > \lceil \frac{v_i h}{v_+} \rceil$. Similarly as before, we infer that $x_i \geq \frac{v_i h}{v_+} + 1$ and that there exists a party $P_j$ with $x_j < \frac{v_j h}{v_+}$. Thus, $u(i,x) = v_i$ and $u(j,x) = x_j \frac{v_+}{h}$. 
Therefore,
\begin{align*}
u(x_\move{i}{j}) - u(x) &= u(i,x_\move{i}{j}) + u(j,x_\move{i}{j}) - u(i,x) - u(j,x) \\ 
			 &= v_i + \min \left(v_j, (x_j + 1) \frac{v_+}{h} \right) - v_i - x_j \frac{v_+}{h} \\
			 &= \min \left(v_j, (x_j + 1) \frac{v_+}{h} \right) - x_j \frac{v_+}{h} \ge \frac{v_+}{h} > 0 \textrm{.}
\end{align*}
As in step 1, we have reached a contradiction.

\paragraph{Step 3.}
In the first two steps, we have shown that $\lfloor \frac{v_\ell h}{v_+} \rfloor \leq x_\ell \leq \lceil \frac{v_\ell h}{v_+} \rceil$ for each party $P_\ell$. Now we show that $M_\calR$ coincides with the \hamilton. For the sake of contradiction, assume that this is not the case. Then, there exist two parties $P_i$ and $P_j$ such that $x_i = \lfloor\frac{v_i h}{v_+}\rfloor +1$, $x_j = \lfloor\frac{v_j h}{v_+}\rfloor$, but the remainder of party $P_i$ is strictly smaller than the remainder of party $P_j$, i.e., 
$\frac{v_i h}{v_+} - \lfloor\frac{v_i h}{v_+}\rfloor < \frac{v_j h}{v_+} - \lfloor\frac{v_j h}{v_+}\rfloor$. It follows that
	\begin{equation}\label{eq:remainder}
	    \frac{v_i h}{v_+} - (x_i - 1) < \frac{v_j h}{v_+} - x_j \text.	
	\end{equation} 
Therefore, 
\begin{align*}
	u(x_\move{i}{j}) - u(x) &= u(i,x_\move{i}{j}) + u(j,x_\move{i}{j}) - u(i,x) - u(j,x) \\ 
					&= (x_i - 1) \frac{v_+}{h} + v_j - v_i - x_j \frac{v_+}{h} \\
					&= \frac{v_+}{h} \left(\frac{v_j h}{v_+} - x_j - \frac{v_i h}{v_+}  + x_i - 1 \right) > 0 \text,
\end{align*}
where the inequality is due to (\ref{eq:remainder}). We again obtain a contradiction, completing the proof.
\end{proof}

An immediate consequence of \thmref{thm:monroeApportionment} is that Monroe's rule respects quota.

\subsection{Phragm\'{e}n's Variance-Minimizing Rule as an Extension of the \sainte}

We now turn to the \sainte. Since the \sainte is the divisor method based on $(1,3,5,\ldots)$, 
\thmref{thm:owa-divisor} implies that it is induced by the Thiele rule with weight sequence $(1, \nicefrac{1}{3}, \nicefrac{1}{5}, \ldots)$.\footnote{The fact that the Thiele rule based on weight sequence $(1, \nicefrac{1}{3}, \nicefrac{1}{5}, \ldots)$ induces the \sainte in the party-list setting is also stated (without proof) by \citet{Pere16a} and \citet{BKP17a}.} However, this is not the only multiwinner rule inducing the \sainte. Recall that \varP is the variant of Phragm\'{e}n's load-balancing methods that minimizes the variance of the loads. 

\begin{theorem}
	The apportionment method induced by \varP coincides with the \sainte. 
\end{theorem}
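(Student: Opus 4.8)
The plan is to imitate the proof of \thmref{thm:maxP}: first pin down the optimal load distributions in the apportionment setting, thereby reducing \varP to the minimisation of a simple separable objective over seat distributions, and then recognise those minimisers as the outcomes of the \sainte by the same greedy argument that underlies \thmref{thm:owa-divisor}.

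\textbf{Reducing \varP to an objective over seat distributions.} Fix an instance $(v,h)$ and form the associated multiwinner instance, so that $k=h$, each voter in $N_i$ approves exactly $C_i$, and $|C_i|=h$. Splitting the minimisation over load distributions as $\min_{S}\min_{\{L\,:\,S_L=S\}}$, fix a committee $S$ and put $x_i=|C_i\cap S|$, so $\sum_i x_i=h$. Since a voter of $N_i$ can bear load only from candidates in $C_i$, the objective $\sum_{i\in N}(\sum_{c}\ell_{i,c})^2$ decomposes into independent per-party contributions; for party $P_i$ one must spread a total load of $x_i$ over the $v_i$ voters in $N_i$, and by convexity of $t\mapsto t^2$ the sum of squares of the voter loads is minimised by the uniform assignment (load $x_i/v_i$ per voter, which is feasible: give each $c\in C_i\cap S$ a load of $1/v_i$ from every voter of $N_i$), contributing $v_i(x_i/v_i)^2 = x_i^2/v_i$. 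Hence $\min_{\{L\,:\,S_L=S\}}\sum_{i\in N}(\sum_{c}\ell_{i,c})^2 = \sum_{i\in[p]} x_i^2/v_i$, which depends on $S$ only through the induced seat distribution. Consequently $M_{\varP}(v,h)$ is exactly the set of seat distributions $x$ with $\sum_i x_i=h$ minimising $f(x):=\sum_{i\in[p]} x_i^2/v_i$.

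\textbf{Identifying the minimisers with the \sainte.} Write $f(x)=\sum_{i\in[p]}\sum_{s=0}^{x_i-1}\frac{2s+1}{v_i}$, so the marginal cost of giving party $P_i$ its $(s+1)$-st seat is $\frac{2s+1}{v_i}$, which is strictly increasing in $s$. Exactly as in the proof of \thmref{thm:owa-divisor} (with ``maximise $v_i w_{s+1}$'' replaced by ``minimise $\frac{2s+1}{v_i}$''), a seat distribution minimises $f$ over all $x$ with $\sum_i x_i=h$ if and only if it can be produced by iteratively handing a seat to a party minimising $\frac{2s_i+1}{v_i}$, equivalently maximising $\frac{v_i}{2s_i+1}$ — which is precisely the greedy description of the divisor method based on $(1,3,5,\ldots)$, \ie the \sainte. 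Equivalently, one checks that $x$ minimises $f$ iff $f(x_\move{i}{j})-f(x)=\frac{2x_j+1}{v_j}-\frac{2x_i-1}{v_i}\ge 0$ for all $i,j$ with $x_i>0$, and invokes the standard characterisation of divisor-method outcomes \citep[][Proposition 3.3]{BaYo82a}. Either way, $M_{\varP}(v,h)$ coincides with the \sainte on every instance.

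\textbf{Main obstacle.} The only point needing care is the passage from local (single-seat-transfer) optimality of $f$ to global optimality, together with handling ties so that the two \emph{sets} of outcomes coincide rather than merely overlap; this is, however, the same discrete-convexity/exchange argument already carried out for \propref{prop:sequential} and \thmref{thm:owa-divisor}, so no new ideas are required. One could alternatively bypass the greedy argument altogether by observing that $\sum_{i\in[p]} v_i\bigl(\tfrac{x_i}{v_i}-\tfrac{h}{v_+}\bigr)^2 = f(x)-\tfrac{h^2}{v_+}$, so that \varP minimises the weighted variance of per-seat representation, a property classically known to characterise the \sainte.
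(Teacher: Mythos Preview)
Your proof is correct, and the reduction of \varP in the apportionment setting to the minimisation of $f(x)=\sum_i x_i^2/v_i$ is exactly what the paper does. The only difference is in how the minimisers of $f$ are identified with the \sainte: the paper simply cites the classical result of \citet{Sain10a} and \citet{Owen21a} (via the identity you mention at the end) and is done in one line, whereas your main route gives a self-contained greedy argument via the telescoping $f(x)=\sum_i\sum_{s=0}^{x_i-1}\tfrac{2s+1}{v_i}$, in the spirit of \propref{prop:sequential} and \thmref{thm:owa-divisor}. Both are fine; your direct argument avoids an external citation and makes the connection to divisor methods explicit, while the paper's version is shorter. Since you already note the variance identity as an alternative, your proposal in fact subsumes the paper's proof.
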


\begin{proof}
	\citet{Sain10a} and \citet{Owen21a} have shown that the \sainte selects exactly those seat distributions $x$ minimizing 
	\[ e(x) = \sum_{i \in [p]} v_i \left( \frac{x_i}{v_i} - \frac{h}{v_+} \right)^2  = \sum_{i \in [p]} \frac{x_i^2}{v_i} - \frac{h^2}{v_+} \text. \]
	%
	%
	\citep[see also][pp.~103--104]{BaYo82a}.
	By ignoring constants, one can see that minimizing $e(x)$ is equivalent to minimizing $\sum_{i \in [p]} \frac{x_i^2}{v_i}$. 
	
	The multiwinner rule \varP chooses a committee corresponding to a load distribution $L$ minimizing $\sum_{j \in N} (\sum_{c \in C} \ell_{j,c})^2$. For a given seat distribution $x$, we  can without loss of generality assume that the load of $x_i$ is distributed equally among all $v_i$ voters in $N_i$. Thus, the total load assigned to a voter $j \in N_i$ equals $\sum_{c \in C} \ell_{j,c} = \frac{x_i}{v_i}$. Therefore,
	\[ \sum_{j \in N} \left(\sum_{c \in C} \ell_{j,c} \right)^2 = \sum_{i \in [p]} v_i \left(\frac{x_i}{v_i} \right)^2 = \sum_{i \in [p]} \frac{x_i^2}{v_i} \text. \]
	It follows that both methods solve identical minimization problems, and thus coincide.   
\end{proof}

\subsection{Other Multiwinner Rules in the Context of Apportionment}

In this section we examine two further Thiele rules, the Chamberlin--Courant rule and the {top-$k$} rule, in the context of apportionment. Since $w_{\text{CC}} = (1, 0, 0, \ldots)$ and $w_{\text{top-$k$}} = (1, \ldots, 1, 0, 0, \ldots)$, \propref{prop:sequential} applies to both rules. 

The Chamberlin--Courant apportionment method produces lots of ties, because it does not strive to give more than one seat even to parties with large support. In the special case where the number~$p$ of parties exceeds the number~$h$ of seats, the Chamberlin--Courant rules assigns one seat to each of the $h$ largest parties.\footnote{In this special case, the apportionment method induced by the Chamberlin--Courant rule coincides with impervious divisor methods (see Footnote~\ref{fn:impervious}).} 

\begin{proposition}\label{prop:ccApportionment}
Consider an instance $(v,h)$ of the apportionment problem. If $p>h$, the apportionment method induced by the Chamberlin--Courant rule assigns one seat to each of the $h$ parties with the greatest number of votes. 
\end{proposition}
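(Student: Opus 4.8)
The plan is to evaluate the Chamberlin--Courant objective directly from~(\ref{eq:total}) and observe that it depends only on the set of parties that receive at least one seat. Since $w_{\text{CC}} = (1,0,0,\ldots)$, taking $w = w_{\text{CC}}$ the satisfaction that a voter from party $P_i$ derives from a seat distribution $x$ is $u_i^w(x) = \sum_{j=1}^{x_i} w_j$, which equals $1$ if $x_i \ge 1$ and $0$ if $x_i = 0$. Substituting this into~(\ref{eq:total}) yields
\[ u(x) \;=\; \sum_{i \in [p]\, :\, x_i \ge 1} v_i \;=\; \sum_{i \in T(x)} v_i, \]
where $T(x) = \{ i \in [p] : x_i \ge 1\}$ denotes the \emph{support} of $x$.

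First I would record two elementary facts about supports. Since $\sum_{i \in [p]} x_i = h$ and each party in $T(x)$ receives at least one seat, we always have $|T(x)| \le h$; conversely, for every $T \subseteq [p]$ with $1 \le |T| \le h$ there is a valid seat distribution $x$ with $T(x) = T$ (assign one seat to each party in $T$ and spread the remaining $h - |T| \ge 0$ seats arbitrarily among them). Hence $M_\calR(v,h)$ is precisely the set of seat distributions whose support maximizes $\sum_{i \in T} v_i$ over all $T \subseteq [p]$ with $|T| \le h$.

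Next, because $v_i > 0$ for every $i$ and because $p > h$ ensures that subsets of size exactly $h$ exist, the sum $\sum_{i \in T} v_i$ is maximized exactly by the size-$h$ subsets consisting of $h$ parties with the largest vote counts (all such subsets being optimal, and these being the only optima; if no two parties are tied at the $h$-th largest value, the optimal $T$ is unique). Finally, a seat distribution $x$ with $|T(x)| = h$, with $\sum_{i \in T(x)} x_i = h$, and with every coordinate in $T(x)$ at least $1$ must have each of those coordinates equal to $1$. Putting these observations together shows that every winning seat distribution assigns exactly one seat to each of the $h$ parties with the greatest number of votes, which is the desired conclusion.

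I do not anticipate a genuine difficulty here; the argument is essentially bookkeeping once the objective has been rewritten as $\sum_{i \in T(x)} v_i$. The only point that needs a little care is the treatment of ties at the $h$-th largest vote count, so the statement should be read as ``each of some $h$ largest parties receives one seat,'' with all tied choices constituting winning seat distributions.
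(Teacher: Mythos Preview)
Your argument is correct. The paper does not provide a proof of this proposition; it is stated without justification, so there is nothing to compare your approach against. Your computation of $u(x)$ from~(\ref{eq:total}) with $w=w_{\text{CC}}$, the reduction to maximizing $\sum_{i\in T} v_i$ over supports $T$ of size at most $h$, and the observation that every optimal support has size exactly $h$ (since all $v_i>0$ and $p>h$) together constitute a clean and complete proof. Your remark on ties is also appropriate given that the paper works with set-valued rules.
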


%

Proposition~\ref{prop:ccApportionment} gives an interesting insight into the nature of the Chamberlin--Courant apportionment method: it selects an assembly having one representative from each of the $h$ largest groups of a given society. 

The top-$k$ rule is on the other end of the spectrum of apportionment methods.


\begin{proposition}\label{prop:topKapportionment}
The apportionment method induced by the top-$k$ rule assigns all seats to the party (or parties) receiving the highest number of votes.
\end{proposition}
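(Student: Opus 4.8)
The plan is to apply Proposition~\ref{prop:sequential}, since $w_{\text{top-}k} = (1,1,1,\ldots)$ is (weakly) non-increasing, so it suffices to analyze the sequential $w$-based OWA rule in the apportionment setting. Recall from Equation~(\ref{eq:total}) that the satisfaction of a voter from party $P_i$ under a seat distribution $x$ is $u_i^w(x) = \sum_{j=1}^{x_i} w_j$, which for the all-ones weight sequence equals $\min(x_i, h)$; but since $x_i \le h$ always, this is simply $x_i$. Hence the total satisfaction is $u(x) = \sum_{i \in [p]} v_i x_i$, and the marginal gain of giving the $s$-th seat to party $P_i$ is $\hat u(i,s) = v_i$, independent of $s$.

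First I would observe that, because the marginal gain $\hat u(i,s) = v_i$ does not depend on $s$, the sequential procedure at every one of its $h$ steps assigns a seat to a party maximizing $v_i$ — that is, to a party with the highest vote count. Iterating, all $h$ seats go to parties attaining $\max_i v_i$. If there is a unique such party $P_{i^*}$, the procedure deterministically returns $(0,\ldots,0,h,0,\ldots,0)$ with all seats at $P_{i^*}$; if several parties tie for the maximum, the sequential rule admits all seat distributions that place all $h$ seats among those tied parties. Either way, every seat goes to a party with the greatest number of votes, and Proposition~\ref{prop:sequential} transfers this conclusion to the OWA-based (top-$k$) rule itself.

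Alternatively, and perhaps more cleanly, I would argue directly from optimality of $x \in M_\calR(v,h)$: the maximum of the linear functional $u(x) = \sum_i v_i x_i$ over the simplex $\{x \in \naturals_0^p : \sum_i x_i = h\}$ is attained exactly at the distributions supported on $\arg\max_i v_i$. Indeed, if $x_i > 0$ for some $P_i$ not attaining the maximum, pick $P_j$ with $v_j > v_i$; then $u(x_{\move{i}{j}}) - u(x) = v_j - v_i > 0$, contradicting optimality. Conversely any $x$ supported on the argmax is optimal since moving a seat between two maximizing parties leaves $u$ unchanged, and no move increases it. This matches the structure of the committees selected by the top-$k$ rule in Step~3 of the reduction: a winning committee consists of the $k=h$ candidates approved by the most voters, and in the constructed instance a candidate in $C_i$ is approved by exactly $v_i$ voters, so the winning committees are precisely those taking all $h$ candidates from the party (or parties) with the most votes.

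There is no real obstacle here; the only minor point requiring care is the handling of ties, i.e.\ making sure the statement ``assigns all seats to the party (or parties) receiving the highest number of votes'' correctly captures the case where several parties share the maximum vote count, which the phrasing already anticipates. The argument is otherwise a one-line consequence of the linearity of the satisfaction function combined with Proposition~\ref{prop:sequential}.
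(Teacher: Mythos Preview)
Your proof is correct. The paper actually states Proposition~\ref{prop:topKapportionment} without proof, treating it as immediate, so there is no argument on the paper's side to compare against. Of your two approaches, the direct one (maximizing the linear objective $u(x)=\sum_i v_i x_i$ via the swap $x_{\move{i}{j}}$) is the cleanest and is exactly in the spirit of the paper's other proofs; the detour through Proposition~\ref{prop:sequential} is valid but unnecessary here, since the objective is already linear and no iterative reasoning is needed. One cosmetic remark: $\sum_{j=1}^{x_i} 1 = x_i$ holds outright, so the $\min(x_i,h)$ step is superfluous.
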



Thus, the top-$k$ rule induces the apportionment method $M^*$ in Proposition 4.5 of \citet{BaYo82a}. 
This apportionment method is often used to select (minority) governments. A minority cabinet is usually formed by the party that receives the greatest number of votes, even if it does not have a majority of the seats. 

We conclude this section by mentioning two further rules that are often studied in the context of multiwinner elections. 
\emph{Satisfaction approval voting (SAV)} \citep{BrKi14a} induces the same apportionment methods as the top-$k$ rule.%
\footnote{\citet{BrKi14a} also propose an apportionment method based on maximizing the satisfaction of voters. This method, called \emph{SAV apportionment}, is different from the apportionment method $M_{\mathit{SAV}}$ that is induced by SAV via our generic three-step procedure defined at the beginning of \secref{sec:apportionmentConstruction}. } 
And \emph{minimax approval voting} \citep{BKS07a} induces the apportionment method that assigns the seats in a way that maximizes the number of seats of the party with the fewest seats. In particular, if there are more parties than seats, then all committees are winning. 

\section{Degressive Proportionality and Election Thresholds}
\label{sec:degressive}

In this section, we show that Thiele rules can also be used to induce apportionment methods with appealing properties other than lower quota. We do this by exploring \emph{degressive proportionality}, the concept suggesting that smaller populations should be allocated more representatives in comparison to the quota-based allocation methods, and \emph{election thresholds}, the concept saying that a party should only be represented in parliament if it receives at least a certain fraction of votes. To the best of our knowledge, the Thiele rules considered in this section have not been studied before. 

\subsection{Degressive Proportionality}

Despite its mathematical elegance and apparent simplicity, the proportionality principle is not always desirable. For instance, in an assembly where decisions are taken under simple majority rule, a cohesive group of $51\%$ has obviously more than a ``fair'' share.
Principles of justice indicate that, for decisions bodies governed by majority rule, fair apportionment should follow a norm of degressive proportionality \citep{laslier2012WhyNotProportional,MT12,KLMT13a}. 
In fact, degressive proportional apportionments can often be observed in parliaments that gather districts, regions, or states of very different sizes, such as the European Parliament.%
\footnote{Even though the composition of the European Parliament is not the result of the application of a well-defined rule, it is interesting to note that the history of successive negotiations produced such a result \citep[see][]{rose2013RepresentingEuropeans}.} 

The \emph{Penrose apportionment method} (aka the \emph{square-root method} and devised by \citet{Penr46a}; see also the work of \citet{SlZy06a}) allocates seats in such a way that the number of seats allocated to a party is proportional to the square root of the votes for that party.
It has been proposed
for a United Nations Parliamentary Assembly~\citep{Bumm10a} and for allocating voting weights\footnote{The question of allocating voting weights to representatives is formally equivalent to the apportionment problem.} in the Council of the European Union~\citep{euVoting}. 

Let $\zeta \colon \reals_{>0} \to \reals_{>0}$ be an increasing function. We say that an apportionment method $M$ satisfies $\zeta$-\emph{proportionality} if, for every instance $(v,h)$ and for every $x \in M(v,h)$, it holds that $x_i \ge \left\lfloor h \frac{\zeta(v_i)}{\sum_\ell \zeta(v_\ell)}\right\rfloor$ for all $i \in [p]$. Intuitively, degressively proportional apportionment methods satisfy \mbox{$\zeta$-proportionality} for some concave function $\zeta$. In particular, we say that an apportionment method satisfies the \emph{Penrose condition} if it satisfies $\zeta_P$-proportionality, where $\zeta_{P}(x) = \sqrt{x}$.

Recall that a function \mbox{$f \colon \reals \to \reals$} is \emph{multiplicative} if for each $x, y \in \reals$ it holds that $f(xy) = f(x)f(y)$.\footnote{If a multiplicative function $f$ is bijective, then its inverse $f^{-1}$ is also multiplicative.} We have the following theorem. 


\begin{theorem}\label{thm:general_proportionality}
Let $\zeta \colon \reals_{>0} \to \reals_{>0}$ be an increasing multiplicative bijection, and let $\alpha = \zeta^{-1}$ be the inverse of $\zeta$. Then, the apportionment method induced by the Thiele rule with weight sequence $w = (\nicefrac{1}{\alpha(1)}, \nicefrac{1}{\alpha(2)}, \nicefrac{1}{\alpha(3)}, \ldots)$ satisfies $\zeta$-proportionality.
\end{theorem}

\begin{proof}
	Let $\calR$ be $w$-Thiele with $w_j = \frac{1}{\alpha(j)}$ for all $j \in \naturals$. We show that $M_\calR$ satisfies \mbox{$\zeta$-proportionality}.  Assume for contradiction that for some instance $(v,h)$ of the apportionment problem, there is a seat distribution $x \in M_\calR(v,h)$ with  $x_i < \left\lfloor h \frac{\zeta(v_i)}{\sum_\ell \zeta(v_\ell)}\right\rfloor$ for some $i \in [p]$. 
	We infer that $x_i \leq h \frac{\zeta(v_i)}{\sum_\ell \zeta(v_\ell)} - 1$, and, using the pigeonhole principle, that there exists a party $P_j$ such that $x_j > h \frac{\zeta(v_j)}{\sum_\ell \zeta(v_\ell)}$. We conclude that $\frac{x_i + 1}{\zeta(v_i)} < \frac{x_j}{\zeta(v_j)}$. Since $\alpha=\zeta^{-1}$ is multiplicative, we get 
\begin{align*}
\frac{\alpha(x_i + 1)}{v_i} = \frac{\alpha(x_i + 1)}{\alpha(\zeta(v_i))} = \alpha\left(\frac{x_i + 1}{\zeta(v_i)}\right) < \alpha\left(\frac{x_j}{\zeta(v_j)}\right) = \frac{\alpha(x_j)}{\alpha(\zeta(v_j))} = \frac{\alpha(x_j)}{v_j}  \text{,}
\end{align*}
and thus $\frac{v_i}{\alpha(x_i + 1)} > \frac{v_j}{\alpha(x_j)}$.
Therefore,
\begin{align*}
u(x_\move{j}{i}) - u(x) = \hat u(i,x_i + 1) - \hat u(j,x_j) 
			 = \frac{v_i}{\alpha(x_i+1)} - \frac{v_j}{\alpha(x_j)}  > 0 \text, \end{align*}
contradicting the assumption that $x \in M_\calR(v,h)$. This completes the proof. 
\end{proof}

An interesting corollary of Theorem~\ref{thm:general_proportionality} is that the Penrose condition can be satisfied by using a Thiele rule. The weight sequence $w = (w_1, w_2, \ldots)$ achieving this is given by $w_j = \frac{1}{j^2}$. 

\begin{corollary}
The apportionment method induced by the Thiele rule with weight sequence $w = (1, \nicefrac{1}{4}, \nicefrac{1}{9}, \ldots)$ satisfies the Penrose condition.
\end{corollary}


Degressive proportionality is also an important feature of the so-called \textit{Cambridge Compromise}, which proposes an apportionment method for the European Parliament based on an affine formula: each member state should be endowed with a fixed number (5) of delegates plus a variable number, proportional to the population of the state \citep{cambridgeCompromise}. 

We show that such an apportionment method can be implemented via a Thiele rule. 

\begin{proposition}
Consider an instance $(v,h)$ of the apportionment setting with $h \ge 5p$ and let $Z$ be a constant with $Z > 5v_+$. Let $\calR$ denote $w$-Thiele with weight sequence $w = (0, 0, 0, 0, Z, 1, \nicefrac{1}{2}, \nicefrac{1}{3}, \ldots)$. Then, for every $x \in M_\calR(v,h)$ and for every $i \in [p]$,  
\[ x_i \ge 5 \quad \text{and} \quad x_i - 5 \geq \left\lfloor \frac{v_i(h - 5p)}{v_+} \right\rfloor \text. \]
\end{proposition}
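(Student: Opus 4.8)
The plan is to analyze the $w$-based OWA rule with $w = (0,0,0,0,Z,1,\nicefrac{1}{2},\nicefrac{1}{3},\ldots)$ directly via the optimality condition for OWA-induced apportionment methods, exactly as in the proofs of \thmref{thm:pavApportionment2} and the Penrose theorem. Recall from the discussion following \eqnref{eq:total} that if $x \in M_\calR(v,h)$, then for all parties $P_i,P_j$ with $x_i > 0$ we have $u(x_\move{i}{j}) - u(x) = v_j w_{x_j+1} - v_i w_{x_i} \le 0$, where here $\hat u(i,s) = v_i w_s$ with the convention $w_1 = w_2 = w_3 = w_4 = 0$, $w_5 = Z$, and $w_{5+t} = \nicefrac{1}{t}$ for $t \ge 1$.

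First I would establish the lower bound $x_i \ge 5$ for every party. The key observation is that $w_5 = Z$ is enormous: the marginal gain $\hat u(i,5) = v_i Z$ from giving a fifth seat to any party exceeds $v_+$ times anything achievable by a marginal seat beyond the fifth for another party, since marginal gains past the fifth seat are at most $v_j \cdot w_6 = v_j \le v_+ < Z/5 \le v_i Z$ (using $v_i \ge 1$). Concretely, suppose $x_i \le 4$ for some $i$; since $h \ge 5p$, by pigeonhole some party $P_j$ has $x_j \ge 6$, so moving a seat from $P_j$ to $P_i$ changes the satisfaction by $\hat u(i, x_i+1) - \hat u(j, x_j)$. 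If $x_i \le 3$ this is $0 - v_j w_{x_j} \le 0$, which does not immediately give a contradiction, so I need to be a bit more careful: I should instead argue that an optimal $x$ necessarily puts the first five seats of each party ``for free'' in the sense that any $x$ with some $x_i<5$ can be strictly improved. The cleanest route is: among all $x$ with $\sum x_i = h \ge 5p$, the total satisfaction is $\sum_i \big(w_5 [x_i \ge 5] + \sum_{t=1}^{x_i - 5} w_{5+t}\big)Z$-weighted appropriately; since $Z > 5 v_+ \ge v_+ \cdot(\text{number of parties below 5})$ roughly, any distribution failing to give $5$ seats to some party is beaten by one that does. I would phrase this as: if $x_i < 5$, pick $P_j$ with $x_j > 5$ (exists by pigeonhole since $h \ge 5p$ forces the average to be at least $5$, and not all can equal $5$ if some is below $5$ unless $h = 5p$; if $h = 5p$ the only candidate is the all-fives distribution anyway, handled separately). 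Then $u(x_\move{j}{i}) - u(x) \ge \hat u(i,5)\cdot[x_i = 4] + \ldots$; the decisive case $x_i = 4$ gives $v_i Z - v_j w_{x_j} \ge Z - v_+ > 0$, a contradiction; for $x_i \le 3$, iterate the argument (the improved distribution still has $x_i < 5$, so eventually we reach the $x_i = 4$ case, contradicting optimality of the original since a strictly better distribution exists).

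Next, having shown $x_i \ge 5$ for all $i$ at any optimum, I would derive the second inequality $x_i - 5 \ge \lfloor v_i(h-5p)/v_+ \rfloor$. Define $y_i := x_i - 5 \ge 0$; then $\sum_i y_i = h - 5p =: h'$, and for seats beyond the fifth the marginal gains are $\hat u(i, 5+t) = v_i/t$ for $t \ge 1$ — exactly the $w_\text{PAV}$ structure. So the restriction of the optimization to the ``surplus'' seats $y$ is precisely the PAV/D'Hondt problem on instance $(v, h')$: formally, since all $x_i \ge 5$, the optimality condition $v_j w_{x_j+1} \le v_i w_{x_i}$ for $x_i > 5$ (equivalently $y_i \ge 1$) reads $v_j/(y_j+1) \le v_i/y_i$, which is exactly the D'Hondt optimality condition, and for the boundary case $x_i = 5$ i.e. $y_i = 0$ one checks the move is also non-improving. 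Invoking \coref{cor:pav} (PAV induces the \dhondt) and the fact that the \dhondt respects lower quota, we get $y_i \ge \lfloor v_i h'/v_+\rfloor = \lfloor v_i(h-5p)/v_+\rfloor$, which is the claim. I should double check the reduction handles ties and the degenerate case $h = 5p$ (then $h' = 0$, the bound is trivially $x_i - 5 \ge 0$, already established).

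The main obstacle I anticipate is making the ``first five seats are forced'' argument fully rigorous without circularity — in particular, justifying the pigeonhole step (that $x_i < 5$ forces some $x_j > 5$, not merely $x_j \ge 5$) and handling the iteration when $x_i$ is much below $5$. The clean fix is to argue by a global exchange/extremality argument rather than a single local move: take $x \in M_\calR(v,h)$ and suppose $S = \{i : x_i < 5\} \neq \emptyset$; since $\sum x_i = h \ge 5p$, we have $\sum_{i \notin S} x_i \ge 5p - \sum_{i \in S} x_i > 5(p - |S|) = 5|S^c|$, so some $j \notin S$ has $x_j \ge 6$; repeatedly moving one seat from such a $j$ to some $i \in S$ with $x_i = 4$ (creating one if necessary by first topping up an $i \in S$ from $3$ to $4$, etc.) strictly increases $u$ once we perform a move into a party at exactly $4$ seats, because each such move gains $v_i Z$ and loses at most $v_j w_6 = v_j \le v_+ < Z$. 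Since some sequence of such moves starting from $x$ reaches a strictly better distribution, $x$ was not optimal — contradiction. Everything after that (the PAV reduction) is routine given the earlier results.
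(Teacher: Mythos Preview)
Your first part (forcing $x_i \ge 5$) follows the paper's approach, but the multi-move argument as you wrote it has a gap. You correctly note that when $x_i \le 3$ a single swap $x_\move{j}{i}$ need not increase $u$ (since $w_{x_i+1}=0$), and you propose to ``top up'' $P_i$ to four seats first and then make one more move. However, you only argue that the \emph{final} move (from $4$ to $5$) is a strict improvement over its predecessor, and then conclude ``some sequence of such moves starting from $x$ reaches a strictly better distribution.'' That inference is not justified: the topping-up moves each \emph{decrease} $u$ by $v_{j'} w_{x_{j'}}\le v_+$, and you never compare the cumulative loss to the gain. The fix---which is exactly what the paper does and is where the hypothesis $Z>5v_+$ (not merely $Z>v_+$) is actually used---is to sum: there are at most $5-x_i\le 5$ transfers, with total loss at most $5v_+$ and total gain $v_i Z \ge Z > 5v_+$, so the end distribution strictly beats $x$.

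Your second part is correct and takes a genuinely different route from the paper. The paper argues directly: if $x_i-5 < \lfloor v_i(h-5p)/v_+\rfloor$, then by pigeonhole some $P_j$ has $x_j-5 > v_j(h-5p)/v_+$, and the swap $x_\move{j}{i}$ strictly improves $u$ since $\hat u(i,x_i+1)-\hat u(j,x_j)=\frac{v_i}{x_i-4}-\frac{v_j}{x_j-5}>0$. You instead observe that, once all $x_i\ge 5$, the objective $u(x)$ equals $Z v_+$ plus the PAV objective evaluated at $y=x-5\mathbf{1}$ on the instance $(v,h-5p)$; hence any optimal $x$ yields an optimal $y$ for PAV, which by \coref{cor:pav} is a \dhondt allocation and therefore satisfies lower quota. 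This reduction is clean and reuses established results; the paper's direct argument is more self-contained but essentially reproves the lower-quota step. Both are valid.
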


\begin{proof}
We first prove that each party gets at least five seats. Indeed, for the sake of contradiction let us assume that party $P_i$ gets less than five seats. By transferring some seats from other parties (having more than five representatives) to $P_i$ we increase the satisfaction of supporters of $P_i$ by at least $v_iZ > 5v_+$. At the same time, transferring a seat from some party $P_j$ (having more than five seats) to $P_i$ reduces the satisfaction of the supporters of $P_j$ by at most $v_+$. Thus, such a transfer of at most five seats improves the total satisfaction of the voters, a contradiction.

We now show that $x_i - 5 \geq \lfloor \frac{v_i(h - 5p)}{v_+} \rfloor$ for all $x \in M_\calR(v,h)$ and for all $i \in [p]$. For the sake of contradiction, assume that there is an instance $(v,h)$ of the apportionment problem and a seat distribution $x \in M_\calR(v,h)$ with $x_i - 5 < \lfloor \frac{v_i(h - 5p)}{v_+} \rfloor$ for some $i \in [p]$. Thus, $x_i - 5 \leq \frac{v_i(h - 5p)}{v_+} - 1$, and, by the pigeonhole principle, there exists a party $P_j$ such that $x_j - 5 > \frac{v_j(h - 5p)}{v_+}$. Consequently, $\frac{x_i - 4}{v_i} < \frac{x_j - 5}{v_j}$, and thus $\frac{v_i}{x_i - 4} > \frac{v_j}{x_j - 5}$.
Therefore,
\begin{align*}
u(x_\move{j}{i}) - u(x) = \hat u(i,x_i + 1) - \hat u(j,x_j) 
	 		 = \frac{v_i}{x_i-5 + 1} - \frac{v_j}{x_j-5} > 0 \text,
\end{align*}
contradicting the assumption that $x \in M_\calR(v,h)$. This completes the proof. 
\end{proof}

\subsection{Election Thresholds}
\label{sec:thresholds}

Thiele rules also provide an interesting way for implementing election thresholds. Thresholds in the form of \emph{percentage hurdles} are often encountered in parliamentary elections \citep[see][Section 7.6]{Puke14a}. 
For a weight sequence $w = (w_1, w_2, \ldots)$, let $w[t]$ be the weight sequence obtained from $w$ by replacing the first $t$ elements of $w$ with zeros. The rule $w[t]$-Thiele induces an apportionment method that allocates seats to a party only if the number of votes the party receives exceeds a certain fraction of the total number of votes \emph{received by parties with allocated seats}.\footnote{
The thresholds implemented by Thiele-based apportionment methods differ from real-world election thresholds in one important aspect: Rather than requiring a certain fraction (say, 5\%) of \emph{all} votes, the thresholds we consider here require a certain fraction of those votes that are received by parties \emph{with allocated seats}.
An advantage of the latter kind of threshold requirement is that it can always be satisfied. By contrast, requiring a certain fraction of all votes can lead to situations in which \emph{no} party reaches the required number of votes and, therefore, no seat allocation satisfies the threshold requirement.  
}
Proposition~\ref{prop:pavThreshold} formalizes this behavior for the weight sequence $w_{\text{PAV}}$.

\begin{proposition}\label{prop:pavThreshold}
	Consider an instance $(v,h)$ of the apportionment problem and 
fix an integer $t$ with $1 \le t < h$.  
Let $\calR$ denote the Thiele rule with weight sequence $w_\text{PAV}[t]$. Then, for all seat distributions $x \in M_\calR(v,h)$, 
\[x_i >0 \quad \text{only if}  \quad \frac{v_i}{\sum_{\ell \in [p]: x_\ell>0} v_\ell} > \frac{t}{h} \text. \] 
\end{proposition}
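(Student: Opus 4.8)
The plan is to work with the optimization characterization of OWA-based apportionment methods recalled at the start of this section (\eqnref{eq:total}), rather than with the sequential formulation, since the weight sequence $w := w_\text{PAV}[t] = (0,\dots,0,\tfrac{1}{t+1},\tfrac{1}{t+2},\dots)$ (with $t$ leading zeros) is \emph{not} non-increasing and so \propref{prop:sequential} does not apply. Concretely, $w_s = 0$ for $s\le t$ and $w_s = \tfrac1s$ for $s>t$, and $x\in M_\calR(v,h)$ iff $x$ maximizes $u(x)=\sum_{i\in[p]}v_i\sum_{j=1}^{x_i}w_j$. Optimality of $x$ then yields the local inequality $u(x_\move{i}{j})-u(x)=v_jw_{x_j+1}-v_iw_{x_i}\le 0$ for all $i,j\in[p]$ with $x_i>0$, which is essentially the only tool needed.

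The main obstacle — the one genuinely new ingredient — is a structural fact: in every optimal $x$, a party that receives at least one seat in fact receives at least $t+1$ of them. To see this, suppose $0<x_i\le t$ for some $i$; then $w_s=0$ for all $s\le x_i$, so the supporters of $P_i$ contribute nothing to $u(x)$. If some party $P_j$ with $j\ne i$ has $x_j\ge t$, then $x_j+1>t$ and passing to the seat distribution $x_\move{i}{j}$ strictly increases $u$ by $v_jw_{x_j+1}=v_j/(x_j+1)>0$, contradicting optimality. Otherwise every party other than $P_i$ has at most $t-1$ seats and $P_i$ has at most $t$, so every party has at most $t$ seats and $u(x)=0$; but putting all $h$ seats on party $P_1$ yields total satisfaction $v_1\sum_{s=t+1}^{h}\tfrac1s>0$ (this is where the hypothesis $t<h$ enters), again contradicting optimality. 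Hence $x_i>0$ implies $x_i\ge t+1$, so every seat-receiving party has already cleared the vanishing prefix of $w$, and for such a party the relevant weights are just the usual PAV weights $1/\cdot$.

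It then remains to run a D'Hondt-style counting argument, in the spirit of the proofs of \thmref{thm:owa-divisor} and \thmref{thm:pavApportionment2}. Put $S=\{\ell\in[p]:x_\ell>0\}$, so $\sum_{\ell\in S}x_\ell=h$ and $|S|\ge 1$, and fix $i\in S$. For every $j\in S$ we have $x_i>t$ and $x_j+1>t$, so the local inequality becomes $v_j/(x_j+1)\le v_i/x_i$ (trivially true when $j=i$); summing $v_j\le (v_i/x_i)(x_j+1)$ over $j\in S$ and using $\sum_{j\in S}x_j=h$ gives $\sum_{j\in S}v_j\le (v_i/x_i)(h+|S|)$, hence $v_i/\sum_{j\in S}v_j\ge x_i/(h+|S|)\ge (t+1)/(h+|S|)$. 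Finally, since every party in $S$ has at least $t+1$ seats, $h=\sum_{j\in S}x_j\ge (t+1)|S|>t|S|$, which is exactly the inequality $h(t+1)>t(h+|S|)$, i.e. $(t+1)/(h+|S|)>t/h$. Chaining the two estimates yields $v_i/\sum_{\ell\in S}v_\ell>t/h$, as claimed. Everything after the structural fact is routine arithmetic, so I expect that fact to be the only point requiring real care.
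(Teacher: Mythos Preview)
Your proof is correct. The structural claim ($x_i>0 \Rightarrow x_i\ge t+1$) and its justification are essentially the same as in the paper, which also argues first that some party must exceed $t$ seats (else $u(x)=0$) and then that no optimal allocation can leave a party with between $1$ and $t$ seats (else a transfer strictly increases $u$).

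The deduction of the threshold inequality, however, is genuinely different. The paper does a case distinction: if $v_i/x_i \ge v_+^x/h$ (where $v_+^x=\sum_{\ell\in S}v_\ell$) the bound is immediate from $x_i>t$; otherwise the pigeonhole principle produces a single party $P_j\in S$ with $v_j/x_j\ge v_+^x/h$, and one local inequality $v_j/(x_j+1)\le v_i/x_i$ combined with $x_j/(x_j+1)>t/(t+1)$ and $x_i\ge t+1$ gives $v_i>t\,v_+^x/h$. You instead sum the local inequalities $v_j\le (v_i/x_i)(x_j+1)$ over all $j\in S$, obtaining $v_+^x\le (v_i/x_i)(h+|S|)$, and then use the counting bound $h\ge (t+1)|S|$ to convert $(t+1)/(h+|S|)$ into the desired $t/h$. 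Your summation argument avoids the paper's case split and pigeonhole step, at the cost of having to control $|S|$; both routes are short once the structural fact is in hand.
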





\begin{proof}
	Consider a seat distribution $x \in M_\calR(v,h)$.
	First, observe that there is at least one party that is allocated strictly more than $t$ seats. (Otherwise, the total voter satisfaction $u(x)$ would be zero and could be improved upon by giving all seats to an arbitrary party.) Without loss of generality let us assume that party $P_1$ is such a party, \ie $x_1>t$.
	
	Second, observe that \emph{every} party that is allocated \emph{some} seat under $x$, is allocated strictly more than $t$ seats. In other word, $x_\ell > 0$ implies $x_\ell >t$. (Otherwise, the total voter satisfaction provided by $x$ could be improved upon by $x_\move{\ell}{1}$.)


Now consider a party $P_i$ with $x_i>0$. The argument above implies that $x_i >t$.
Let $v_+^x$ denote the total number of votes for parties with allocated seats under~$x$, \ie $v_+^x = \sum_{\ell \in [p]: x_\ell>0} v_\ell$. 
If $\frac{v_i}{x_i} \ge \frac{v_+^x}{h}$, then $\frac{v_i}{v_+^x} \geq \frac{x_i}{h} > \frac{t}{h}$ and we are done. So let us assume that $\frac{v_i}{x_i} < \frac{v_+^x}{h}$. 
From the pigeonhole principle we infer that there exists a party $P_j$ with $x_j>0$ such that $\frac{v_j}{x_j} \ge \frac{v_+^x}{h}$. Since $x_j >0$, the observation above implies that $x_j > t$. Therefore,
\begin{align*}
0 \ge u(x_\move{i}{j}) - u(x) = \hat u(j,x_j+1) - \hat u(i,x_i) = \frac{v_j}{x_j + 1} - \frac{v_i}{x_i}
\end{align*}
and thus 
\begin{align*}
v_i \geq x_i \frac{v_j}{x_j + 1} = x_i \frac{v_j}{x_j} \cdot \frac{x_j}{x_j + 1} > x_i \frac{v_j}{x_j} \cdot \frac{t}{t + 1} \geq t \frac{v_j}{x_j} \geq t \frac{v_+^x}{h} \text.
\end{align*}
This completes the proof.
\end{proof}

An interesting open question is whether it is possible to naturally modify the definition of the Monroe system to implement election thresholds.

\section{Conclusion}
\label{sec:conclusion}

In legislative procedures, proportional representation is typically achieved by employing party-list apportionment methods such as the \dhondt or the \hamilton. These methods impose proportionality by assuring that each party in a representative body is allocated a number of representatives that is proportional to the number of received votes. 
Nevertheless, party-list systems have many drawbacks---for instance, they provide very weak links between elected legislators and their constituents.

In this paper we have proposed a simple and natural formal framework that allows us to view approval-based multiwinner election rules as apportionment methods. Some multiwinner rules (such as PAV, Monroe's rule, \phrag's rules, and the rules considered in \secref{sec:degressive}) induce apportionment methods that are used---or have been proposed---for parliamentary apportionment as variants of the notions of proportional, or degressive proportional, representation. 
The apportionment methods that are induced by other multiwinner rules (such as the Chamberlin--Courant rule, the top-k rule, Satisfaction Approval Voting, and Minimax Approval Voting) are, however, less appealing. These results give interesting insights into the nature of the analyzed multiwinner rules, and they show how traditional apportionment methods can be extended to settings where voters can vote for individual candidates.
Further, our methodology allowed us to discover a new interesting multiwinner election rule which, if applied to the apportionment setting, satisfies Penrose's (square root proportionality) condition.

\paragraph{Acknowledgments}

We thank Bernard Grofman, Svante Janson, and Friedrich Pukelsheim for helpful discussions. We also thank the anonymous reviewers for their comments, which helped improve the paper. This material is based on work supported by ERC-StG 639945 (ACCORD), by a Feodor Lynen return fellowship of the Alexander von Humboldt Foundation, by a Humboldt Research Fellowship for Postdoctoral Researchers, and by the ANR project ANR13-BSH1-0010 DynaMITE. A preliminary version of this paper appeared in the proceedings of the \emph{31st Conference on Artificial Intelligence (AAAI-2017)}.

\bibliographystyle{plainnat}
\bibliography{main}

\appendix

\section{Proportional Justified Representation Implies Lower Quota}
\label{sec:jr}

In this section, we prove that the every multiwinner rule satisfying proportional justified representation induces an apportionment method satisfying quota. The following definition is due to \citet{SFF+17a}.

\begin{definition}
	Consider an instance $(A,k)$ of the multiwinner election setting and a size-$k$ committee $S$. The committee $S$ \emph{provides proportional justified representation (PJR) for $(A,k)$} if there does not exist a subset  $N^* \subseteq N$ of voters and an  integer $\ell>0$ with $|N^*| \ge \ell \frac{n}{k}$ such that
	$| \bigcap_{i \in N^*} A_i| \ge \ell$, but $|S \cap (\bigcup_{i \in N^*} A_i)|<\ell$. 
	A multiwinner rule $\calR$ \emph{satisfies proportional justified representation (PJR)} if for every instance $(A,k)$, every committee in $\calR(A,k)$ provides PJR for $(A,k)$.
\end{definition}

\begin{proposition}
	If a multiwinner rule satisfies proportional justified representation, then it satisfies lower quota. 
\end{proposition}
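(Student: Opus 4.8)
The plan is to argue by contradiction, using the witness coalition that is essentially handed to us by the apportionment-to-multiwinner reduction. Fix a multiwinner rule $\calR$ satisfying PJR, an apportionment instance $(v,h)$, and the associated multiwinner instance $(A,k)$ produced by Step~1 of the construction, so that $k=h$, $n=v_+$, every party $P_i$ contributes a candidate set $C_i$ with $|C_i|=h$, and every voter in $N_i$ approves exactly $C_i$. Pick an arbitrary winning committee $S\in\calR(A,k)$ and let $x$ be the induced seat distribution, $x_i=|C_i\cap S|$. The goal is to show $x_i\ge\lfloor\frac{v_ih}{v_+}\rfloor$ for every $i$.

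Suppose not: $x_i<\lfloor\frac{v_ih}{v_+}\rfloor$ for some party $P_i$. Set $\ell=\lfloor\frac{v_ih}{v_+}\rfloor$; since $x_i\ge 0$ we may assume $\ell\ge 1$, for otherwise there is nothing to prove. The violating coalition is simply $N^*=N_i$, and it remains to check the three clauses of the PJR condition for this $N^*$ and this $\ell$. First, $|N^*|=v_i$, and from $\ell\le\frac{v_ih}{v_+}$ we get $v_i\ge\ell\cdot\frac{v_+}{h}=\ell\cdot\frac{n}{k}$, so the cardinality requirement $|N^*|\ge\ell\frac{n}{k}$ holds. Second, all voters in $N_i$ approve exactly $C_i$, so $\bigcap_{j\in N^*}A_j=C_i$, and $|C_i|=h\ge\ell$ because $\ell\le\frac{v_ih}{v_+}\le h$ (using $v_i\le v_+$); hence $|\bigcap_{j\in N^*}A_j|\ge\ell$. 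Third, $\bigcup_{j\in N^*}A_j=C_i$ as well, so $|S\cap\bigcup_{j\in N^*}A_j|=|S\cap C_i|=x_i<\ell$ by assumption. Together these three facts say precisely that $S$ fails to provide PJR for $(C,A,k)$, contradicting the choice of $\calR$ and $S$.

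Hence $x_i\ge\lfloor\frac{v_ih}{v_+}\rfloor$ for all $i$, i.e.\ $M_\calR$ respects lower quota, which is what we wanted.

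I do not expect a genuine obstacle here: the argument is a direct translation of the reduction into the language of PJR, with the single coalition $N_i$ doing all the work. The only points needing a line of care are (i) discharging the trivial case $\ell=0$, and (ii) verifying the two floor-based inequalities $\ell\le h$ and $v_i\ge\ell\frac{n}{k}$, both of which are immediate. (If one wanted the stronger ``quota'' statement mentioned in the section heading rather than just lower quota, one would additionally need an upper-bound argument, which does \emph{not} follow from PJR alone and is presumably why the proposition is stated only for lower quota.)
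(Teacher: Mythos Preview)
Your proof is correct and follows essentially the same approach as the paper: both set $N^*=N_i$ and $\ell=\lfloor v_ih/v_+\rfloor$ and verify the three PJR clauses directly from the structure of the party-list instance. The only cosmetic difference is that you phrase it as a contradiction whereas the paper argues directly that PJR forces $|S\cap C_i|\ge\ell$.
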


\begin{proof}
	Let $\calR$ be a multiwinner election rule satisfying PJR, and consider an instance $(v,h)$ of the apportionment problem. Consider an arbitrary party $P_i$ with $i \in [p]$. Translated to the multiwinner election setting, there is a group $N_i$ of voters of size $|N_i| = v_i$ and every voter in $N_i$ approves all $h$ candidates in $C_i$. Define $\ell = \lfloor \frac{v_i h}{v_+} \rfloor$. We have $\ell \le \frac{v_i h}{v_+}$ and thus $v_i \ge \ell \frac{v_+}{h}$.
	
	Let $N^* = N_i$. Since $|N^*| = v_i \ge \ell \frac{v_+}{h}$ and $| \bigcap_{i \in N^*} A_i| = |C_i| = h \ge \ell$, proportional justified representation implies that every committee $S$ that is chosen by $\calR$ satisfies $|S \cap (\bigcup_{i \in N^*} A_i)| \ge \ell$. And since $S \cap (\bigcup_{i \in N^*} A_i) = S \cap C_i$, we know that $x_i = |S \cap C_i| \ge \ell$. Thus, party $P_i$ is allocated at least $\ell = \lfloor \frac{v_i h}{v_+} \rfloor$ seats, as desired.
\end{proof}

Since Proportional Approval Voting satisfies PJR---and in fact even the stronger property \emph{extended justified representation} \citep{ABC+17a}---it also satisfies lower quota.


\end{document}